\numberwithin{equation}{section}
\newtheorem{thm}{Theorem}[section]
\newtheorem{cor}[thm]{Corollary}
\newtheorem{lem}[thm]{Lemma}
\newcommand{\eqa}{\begin{eqnarray}}
\newcommand{\eeqa}{\end{eqnarray}}
\newcommand{\beq}{\begin{equation}}
\newcommand{\eeq}{\end{equation}}
\newcommand{\nn}{\nonumber}
\newcommand{\p}{\partial}
\begin{document}

\title[]
{Solutions to Open WDVV Equations for the Universal Whitham Hierarchy}

\author[]{Shilin Ma}

\address[]{Shilin Ma,  School of Mathematics and Statistics,  Xi’an Jiaotong University, Xi’an 710049,  P. R.
	China,}
\email{mashilin@xjtu.edu.cn}
\date{\today}
%{To the memory of Professor Boris Dubrovin}

\begin{abstract}
In this paper, we construct a pair of solutions to the open WDVV equations associated with the infinite-dimensional Frobenius manifolds that underlie the genus-zero universal Whitham hierarchy, and for the resulting flat F-manifolds, we explicitly construct their principal hierarchies. We further demonstrate that this construction is compatible with finite-dimensional reductions, yielding solutions for Frobenius manifolds associated with general rational superpotentials and those subject to a $\mathbb{Z}_2$-symmetry reduction. In particular, the polynomial solutions derived by Basalaev and Buryak via open Saito theory for A and D type singularities are recovered as special cases.

\vskip 2ex

\end{abstract}

\maketitle %%%%%%%%%%%%%%%%%%%%%%5
%\newpage
%{  \tableofcontents}
%\newpage

\tableofcontents
\section{Introduction}
The theory of Frobenius manifolds, first introduced by B. Dubrovin \cite{dub1998}, furnishes a geometric framework for the Witten-Dijkgraaf-Verlinde-Verlinde (WDVV) equations of two-dimensional topological field theory (2D-TFT). Briefly, a Frobenius manifold is characterized by a flat metric and a compatible, smoothly varying Frobenius algebra structure on its tangent spaces, complemented by a specific Euler vector field that governs its quasi-homogeneous properties. This mathematical structure has proven to be a crucial tool in diverse areas of mathematical physics, such as Gromov-Witten theory \cite{manin1999frobenius}, singularity theory \cite{hertling2002frobenius}, and, in particular, integrable systems with one spatial variable \cite{dubrovin2001normal}.

It was shown by Dubrovin and Zhang \cite{dubrovin2001normal} that from a semisimple Frobenius manifold, one can associate a dispersionless, tau-symmetric bi-Hamiltonian integrable system on its loop space, called the principal hierarchy. The tau-function of the topological solution to this hierarchy is the genus-zero partition function for the corresponding TFT. The full-genus partition function can be obtained by a quasi-Miura deformation of this hierarchy, which linearizes the Virasoro symmetries acting on the tau-function. This deformation can be obtained by solving the generalized loop equation for the Frobenius manifold. This celebrated result generalizes the well-known Witten conjecture \cite{witten1990two, kontsevich1992intersection}, which states that the partition function of 2D topological quantum gravity is the tau-function of the KdV hierarchy. For generalizations and applications of the Dubrovin-Zhang theory in recent years, see \cite{liu2015bcfg,dubrovin2016hodge,liu2022variational,liu2023variational,liu2025generalized} and references therein.

The concept of infinite-dimensional Frobenius manifolds emerged from efforts to generalize this geometric framework to integrable hierarchies with two spatial variables. The first example was provided by Carlet, Dubrovin, and Mertens\cite{carlet2011infinite}, defining such a manifold on a space of pairs of meromorphic functions. This powerful approach has since been successfully applied to uncover the geometric structures underlying a variety of important hierarchies, including the two-component Kadomtsev-Petviashvili (KP) hierarchy of Type B\cite{wu2012class}, the Toda lattice hierarchy\cite{wu2014infinite}, and an extended KP hierarchy\cite{ma2021infinite}. It should be noted that alternative construction methods also exist, for instance via Schwartz functions \cite{raimondo2012frobenius} or classical r-matrices \cite{szablikowski2015classical}. Furthermore, the corresponding principal hierarchies for some of these manifolds have also been investigated \cite{raimondo2012frobenius,carlet2015principal}.

A fundamental object in the theory of integrable systems is the universal Whitham hierarchy, introduced by Krichever \cite{krichever1988method,krichever1994tau}, primarily due to its "universality"; it contains many celebrated integrable systems such as the dispersionless KP hierarchy as reductions, and finding profound applications in fields ranging from Seiberg-Witten theory to conformal mapping \cite{gorsky1995integrability,itoyama1996integrability,itoyama1997prepotential,krichever1997integrable,gorsky1998rg,krichever2004laplacian,zabrodin2005whitham,wiegmann2000conformal,martinez2006genus}. In a previous work \cite{ma2024infinite}, the present author, together with Wu and Zuo, investigated the genus-zero case of this hierarchy, constructing (following the approach of \cite{carlet2011infinite}) a class of infinite-dimensional Frobenius manifolds and explicitly deriving their corresponding principal hierarchies.

The concept of open WDVV equations was introduced in \cite{horev2012open}, arising from the axioms of open Gromov-Witten theory with applications to the computation of Welschinger invariants. These equations, originally derived to describe generating functions for certain integrals on the moduli spaces of Riemann surfaces with boundaries \cite{pandharipande2024intersection,buryak2022open,buryak2024open}, can be defined for any Frobenius manifold $M$. A crucial insight, first observed by P. Rossi, is the direct correspondence between a solution to these equations and the structure of a flat F-manifold on the product space $M\times U$, where $U$ is a domain in $\mathbb{C}$. This correspondence has proven to be a powerful tool, allowing concepts from the theory of Frobenius manifolds, such as cohomology field theory, Virasoro constraints, Givental theory and integrable systems, to be generalized to the study of open intersection theory \cite{buryak2015equivalence,Buryak:2018ypm,basalaev2019open,arsie2021flat,arsie2023semisimple}.

The aim of this paper is to construct the first solutions to the open WDVV equations for an infinite-dimensional Frobenius manifolds, namely for those associated with the universal Whitham hierarchy \cite{ma2024infinite}. We explicitly construct a pair of such solutions, which in turn define a pair of flat F-manifold structures, and derive the explicit form of their principal hierarchies.

A key feature of our results is their compatibility with finite-dimensional reductions. As established in \cite{ma2024infinite}, both the Frobenius manifold structure and its associated principal hierarchy for the universal Whitham hierarchy restrict under a known limit to those of a Frobenius manifold with rational superpotentials. Moreover, we show that our construction is also compatible with a different restriction to a submanifold underlying the multicomponent BKP hierarchy, and subsequently, with the finite-dimensional limit corresponding to even rational superpotentials. This general result provides a unified framework that encompasses previously known special cases, such as the solutions for Frobenius manifolds associated with A and D type singularities constructed by Basalaev and Buryak via open Saito theory \cite{basalaev2021open}.

The results of this paper may also be viewed as a first step towards a broader goal: to establish a relationship between infinite-dimensional flat F-manifolds, their associated integrable systems, and enumerative geometry, analogous to the celebrated interplay between these structures in the finite-dimensional case \cite{buryak2022open,buryak2024open}.

\subsection{Main results}
To state our main results, we first recall some essential notations from \cite{ma2024infinite}. The manifold $\mathcal{M}$ consists of pairs of meromorphic functions $(a(z),\hat{a}(z))$ defined on an exterior domain $\mathbf{D}^{\text{ext}}$ and an interior domain $\mathbf{D}^{\text{int}}$, respectively. These domains are formed from a set of pairwise non-intersecting closed disks $D_1, \ldots, D_m$ in the Riemann sphere, where $\mathbf{D}^{\text{int}}=\bigcup_{i=1}^m D_i$ and $\mathbf{D}^{\text{ext}}=\mathbb{P}^1\setminus \bigcup_{i=1}^m D_i$. The functions are required to have specific pole structures, with $a(z)$ having a pole at $z=\infty$ and $\hat{a}(z)$ having poles at points $\varphi_j \in D_j$ for each $j=1, \ldots, m$. Their Laurent expansions near these poles are given by:
\begin{align}\label{forma}
	a(z)&=z^{n_{0}}+a_{n_{0}-2}z^{n_{0}-2}+a_{n_{0}-3}z^{n_{0}-3}+\cdots,&\quad &\text{as } z\to \infty; \nn\\	
	\hat{a}(z)&=\hat{a}_{j,-n_{j}}(z-\varphi_{j})^{-n_{j}}+\hat{a}_{j,-n_{j}+1} (z-\varphi_{j})^{-n_{j}+1}+\cdots,& \quad &\text{as } z\to \varphi_{j},
\end{align}
where  $n_0, \ldots, n_m$ are given positive integers. 

The Frobenius manifold structure $(\mathcal{M},\eta,\circ,e,E)$ is described using a decomposition of functions.  Let $\mathcal{H}$ be the space of functions holomorphic in a neighborhood of the domain boundaries $\gamma=\bigcup_{i=1}^m \gamma_i$ where $\gamma_i=\partial D_i$. Any function $f(z)\in\mathcal{H}$ admits a unique decomposition $$f(z)=f(z)_{+}+f(z)_{-},$$ 
where  $f(z)_+$ is analytic in $\mathbf{D}^{\text{int}}$ and $f(z)_-$ is analytic in $\mathbf{D}^{\text{ext}}$, vanishing at infinity. 
For example, $\mathcal{H}$ contains the characteristic functions \( \mathbf{1}_{\gamma_i}(z) \), defined by the condition 
\begin{equation}\label{1fun}
	\mathbf{1}_{\gamma_i}(z)\big|_{\gamma_j} = \delta_{ij}, \quad  i, j \in\{ 1, \ldots, m\}.
\end{equation}
It is a subtle fact that, although  $f(z)\mathbf{1}_{\gamma_{j}}(z)$  vanishes on contours
$\gamma_{j'}$ for $j'\ne j$, its projections $(f(z)\mathbf{1}_{\gamma_{j}}(z))_{\pm}$ may be non-zero on $\gamma_{j'}$.

Using these projections, we define: 
\begin{equation}
	\label{zetaell} \zeta(z) := a(z) - \hat{a}(z), \quad \ell(z) := a(z)_{+} + \hat{a}(z)_{-}. 
	\end{equation} 
	The inverse transformation is given by: 
	\begin{equation} 
		a(z)=\ell(z)+\zeta(z)_{-}, \quad \hat{a}(z)=\ell(z)-\zeta(z)_{+}. 
		\end{equation}
This decomposition is central to the construction, as the flat coordinates $\mathbf{u}$ with respect to the metric $\eta$ are constructed from $\zeta(z)$ and $\ell(z)$. 
The set of these coordinates is denoted by:
\begin{equation}\label{flatcor}
	\mathbf{u}=\{t_{i,s} \mid 1 \leq i \leq m, s \in \mathbb{Z}\} \cup \{h_{0,j} \mid 1 \leq j \leq n_0 - 1\} \cup \{h_{k,r} \mid 1 \leq k \leq m, 0 \leq r \leq n_k\}.
\end{equation}
For the statement of our main theorems, we will also need the following standard truncations of Laurent series:
	\[
\left(\sum_{p} f_p z^p \right)_{\infty,\,\ge0}=\sum_{p\ge0} f_p z^p, \quad
\left(\sum_{p} f_p (z-\varphi_i)^p \right)_{\varphi_i,\,\le -1}=\sum_{p\le -1} f_p (z-\varphi_i)^p.
\]
Furthermore, the integer $d_j$ denotes the winding number of the image of the contour $\gamma_j$ under the map $\zeta(z)$, and the constants $c_p$ are defined by $c_p = \sum_{k=1}^p \frac{1}{k}$ for $p \ge 1$, with $c_0=0$.

With these notations, we now present our main theorems.

\begin{thm}\label{mainthm1}
There exists a pair of functions, $\Omega(\mathbf{u},s)$ on $\mathcal{M}\times \mathbf{D}^{\mathrm{ext}}$ and $\hat{\Omega}(\mathbf{u},s)$ on $\mathcal{M}\times \mathbf{D}^{\mathrm{int}}$, that provide solutions to the open WDVV equations for $\mathcal{M}$. These functions are defined by the following system of differential equations for any flat coordinates $\alpha,\beta\in \mathbf{u}$:
%	\begin{align*}
%		&\p_{\alpha}\p_{\beta}\Omega(\mathbf{u},s)=\left(\frac{\partial_\alpha \zeta(s) \partial_\beta \zeta(s)}{\zeta^{\prime}(s)}\right)_{-}+ \left(\frac{\partial_\alpha \ell(s) \partial_\beta \ell(s)}{\ell^{\prime}(s)}\right)_{\infty,\,\ge 0}+\sum_{j=1}^{m}\left(\frac{\partial_\alpha \ell(s) \partial_\beta \ell(s)}{\ell^{\prime}(s)}\right)_{\varphi_{j},\,\le -1}; \\
%		&\p_{s}\Omega(\mathbf{u},s)=a(s), \quad s\in \mathbf{D}^{\mathrm{ext}}; \\
%		&\p_{\alpha}\p_{\beta}\hat{\Omega}(\mathbf{u},s)= -\left(\frac{\partial_\alpha \zeta(s) \partial_\beta \zeta(s)}{\zeta^{\prime}(s)}\right)_{+}+\left(\frac{\partial_\alpha \ell(s) \partial_\beta \ell(s)}{\ell^{\prime}(s)}\right)_{\infty,\,\ge 0}+\sum_{j=1}^{m}\left(\frac{\partial_\alpha \ell(s) \partial_\beta \ell(s)}{\ell^{\prime}(s)}\right)_{\varphi_{j},\,\le -1}; \\
%	 &\p_{s}\hat{\Omega}(\mathbf{u},s)=\hat{a}(s), \quad  s\in \mathbf{D}^{\mathrm{int}}.
%	\end{align*}
\begin{align*}
	% --- 第一组: Omega ---
	\partial_{\alpha}\partial_{\beta}\Omega(\mathbf{u},s) &= \biggl(\frac{\partial_\alpha \zeta(s) \partial_\beta \zeta(s)}{\zeta'(s)}\biggr)_{\!-} 
	+ \biggl(\frac{\partial_\alpha \ell(s) \partial_\beta \ell(s)}{\ell'(s)}\biggr)_{\!\infty,\,\ge 0} 
	+ \sum_{j=1}^{m}\biggl(\frac{\partial_\alpha \ell(s) \partial_\beta \ell(s)}{\ell'(s)}\biggr)_{\!\varphi_{j},\,\le -1}, \\[1pt]
	\partial_{s}\Omega(\mathbf{u},s) &= a(s), \quad s\in \mathbf{D}^{\mathrm{ext}}; \\[12pt] % <--- 大间距，把两组明显分开
	% --- 第二组: Hat Omega ---
	\partial_{\alpha}\partial_{\beta}\hat{\Omega}(\mathbf{u},s) &= -\biggl(\frac{\partial_\alpha \zeta(s) \partial_\beta \zeta(s)}{\zeta'(s)}\biggr)_{\!+} 
	+ \biggl(\frac{\partial_\alpha \ell(s) \partial_\beta \ell(s)}{\ell'(s)}\biggr)_{\!\infty,\,\ge 0} 
	+ \sum_{j=1}^{m}\biggl(\frac{\partial_\alpha \ell(s) \partial_\beta \ell(s)}{\ell'(s)}\biggr)_{\!\varphi_{j},\,\le -1}, \\[1pt]
	\partial_{s}\hat{\Omega}(\mathbf{u},s) &= \hat{a}(s), \quad s\in \mathbf{D}^{\mathrm{int}}.
\end{align*}
	Here, the projection operators $(\cdot)_{\pm}, (\cdot)_{\infty,\ge0}$ and $(\cdot)_{\varphi_{j},\le-1}$ are understood to act on the Laurent series with respect to $s$.
\end{thm}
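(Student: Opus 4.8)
The plan is to treat $\Omega$ and $\hat\Omega$ separately and, for each, to verify that the over-determined system in the statement is compatible (so that the potential exists) and that the resulting function satisfies the open WDVV equations. Writing $F$ for the Frobenius potential of $\mathcal{M}$ and $c_{\alpha\beta}^{\gamma}=\eta^{\gamma\mu}\,\partial_\alpha\partial_\beta\partial_\mu F$ for its structure constants, the open WDVV equations for $\Omega$ are the two families
\[
\sum_{\mu}c_{\alpha\beta}^{\mu}\,\partial_\mu\partial_\gamma\Omega+\partial_\alpha\partial_\beta\Omega\,\partial_\gamma\partial_s\Omega=\sum_{\mu}c_{\gamma\beta}^{\mu}\,\partial_\mu\partial_\alpha\Omega+\partial_\gamma\partial_\beta\Omega\,\partial_\alpha\partial_s\Omega,
\]
and
\[
\sum_{\mu}c_{\alpha\beta}^{\mu}\,\partial_\mu\partial_s\Omega+\partial_\alpha\partial_\beta\Omega\,\partial_s\partial_s\Omega=\partial_\alpha\partial_s\Omega\,\partial_\beta\partial_s\Omega,
\]
which are precisely the associativity relations of the flat F-manifold product $\partial_\alpha\bullet\partial_\beta=\sum_\gamma c_{\alpha\beta}^\gamma\partial_\gamma+\partial_\alpha\partial_\beta\Omega\,\partial_s$, $\partial_\alpha\bullet\partial_s=\partial_\alpha\partial_s\Omega\,\partial_s$ on $\mathcal{M}\times\mathbf{D}^{\mathrm{ext}}$; the same scheme applies verbatim to $\hat\Omega$ with $\partial_s\hat\Omega=\hat a(s)$ and the sign-flipped projection $-(\cdot)_+$.

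First I would settle existence. Since $\partial_\alpha\partial_\beta\Omega$ is manifestly symmetric in $\alpha,\beta$, the system is compatible once one checks the two integrability conditions $\partial_\gamma(\partial_\alpha\partial_\beta\Omega)=\partial_\alpha(\partial_\gamma\partial_\beta\Omega)$ and $\partial_s(\partial_\alpha\partial_\beta\Omega)=\partial_\alpha\partial_\beta a(s)$. Both reduce to identities among Laurent projections in $s$: one differentiates the right-hand side of the $\partial_\alpha\partial_\beta\Omega$ equation and inserts the structural relations of \cite{ma2024infinite} expressing $\partial_\alpha\partial_\beta\zeta$ and $\partial_\alpha\partial_\beta\ell$ through $\zeta',\ell'$ and the flat covectors $\partial_\alpha\zeta,\partial_\alpha\ell$ (the latter carrying the winding numbers $d_j$ and the constants $c_p$). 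Granting these, $\Omega$ and $\hat\Omega$ exist and are determined up to additive terms at most linear in the flat coordinates, which do not affect the equations.

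For the equations themselves the key input is the Landau--Ginzburg multiplication relation of $\mathcal{M}$: combining the residue formulas for $\eta$ and for $c_{\alpha\beta}^\gamma$ from \cite{ma2024infinite}, the products $(\partial_\alpha\zeta\,\partial_\beta\zeta)/\zeta'$ and $(\partial_\alpha\ell\,\partial_\beta\ell)/\ell'$ decompose into a flat-covector combination $\sum_\mu c_{\alpha\beta}^\mu(\cdots)$ plus a remainder equal to exactly the projected expression defining $\partial_\alpha\partial_\beta\Omega$. Read in the variable $s$ with $\partial_s\Omega=a(s)$, this decomposition is the content of the second family, so verifying that family amounts to matching the $\zeta,\ell$ projection formula against $(\partial_\alpha a\,\partial_\beta a-\sum_\mu c_{\alpha\beta}^\mu\partial_\mu a)/a'$ through $a=\ell+\zeta_-$. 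For the first family I would insert $\partial_\mu\partial_\gamma\Omega$, pull the $s$-independent constants $c_{\alpha\beta}^\mu$ through the projections, and apply the multiplication relation; the fully symmetric triple term then cancels under the antisymmetrization $\alpha\leftrightarrow\gamma$, reducing the equation to the symmetry in $\alpha\leftrightarrow\gamma$ of the complementary projection of $\partial_\alpha\partial_\beta\Omega\cdot\partial_\gamma a$, which follows from the closed WDVV associativity of $\mathcal{M}$ together with the total symmetry of the residue pairing.

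I expect the main obstacle to be controlling the infinite-dimensional contraction $\sum_\mu c_{\alpha\beta}^\mu(\cdots)$ in the presence of two superpotentials $\zeta,\ell$ and three projection operators at once: one must establish a completeness (resolution-of-identity) relation for the flat covectors $\partial_\mu\zeta,\partial_\mu\ell$ of \cite{ma2024infinite}, justify interchanging the sum over the infinite index set $\mathbf{u}$ with the residue and projection operations, and check that the contributions attached to the boundary $\gamma$, to the pole at $\infty$, and to the poles at the $\varphi_j$ close up consistently---this is where the subtlety that $(f\,\mathbf{1}_{\gamma_j})_\pm$ need not vanish off $\gamma_j$ enters. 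The interior potential $\hat\Omega$ is handled in parallel, the only essential change being $(\cdot)_-\mapsto-(\cdot)_+$ and $a\mapsto\hat a$, with $\zeta=a-\hat a$ linking the two computations.
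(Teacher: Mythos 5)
Your overall skeleton is sound and in fact parallels the paper's logic: the second open WDVV family is equivalent to the identity $\partial_\alpha\partial_\beta\Omega=\bigl(\partial_\alpha a(s)\,\partial_\beta a(s)-(\partial_\alpha\circ\partial_\beta)a(s)\bigr)/a'(s)$, and the $s$-integrability condition $\partial_s(\partial_\alpha\partial_\beta\Omega)=\partial_\alpha\partial_\beta a(s)$ is exactly the flatness statement $\nabla_{\partial_\alpha}\partial_\beta=0$ read through the explicit Levi-Civita connection \eqref{conn}. But the crux of your argument is what you call the ``Landau--Ginzburg multiplication relation'': that the sum of the three projections in the theorem equals $\bigl(\partial_\alpha a\,\partial_\beta a-(\partial_\alpha\circ\partial_\beta)a\bigr)/a'$ (and its $\hat a$-counterpart with $-(\cdot)_+$). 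You assert this follows by ``combining the residue formulas'' of \cite{ma2024infinite}, but this assertion \emph{is} the paper's Lemma \ref{lem:identities}, and proving it is the real technical content of the proof: one writes the tangent vectors as $\vec{\xi}_\nu=\eta(\vec{\omega}_\nu)$ using surjectivity of $\eta$, computes $(\partial_1\circ\partial_2)a$ via the operator $C_{\vec{\xi}}$ in \eqref{mulope}, and then converts each projection term using the three exchange identities \eqref{etalem1}--\eqref{etalem3} relating $\xi_\nu,\hat\xi_\nu$ to $\omega_\nu,\hat\omega_\nu$. None of this is routine residue bookkeeping, and without it your proposal assumes precisely what has to be proven.

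The second gap is structural: your entire verification is phrased through the contraction $\sum_\mu c_{\alpha\beta}^{\mu}(\cdots)$ over the infinite coordinate set $\mathbf{u}$, and you correctly flag that this requires a completeness relation for the flat covectors and an interchange of the infinite sum with projections and residues --- but you leave that obstacle unresolved, which makes the proposed proof incomplete exactly where the infinite-dimensionality bites. The paper is engineered to avoid this issue entirely: it never expands $\circ$ in coordinates, but works with the intrinsic product $(\partial_\alpha\circ\partial_\beta)a$ defined by \eqref{mul}/\eqref{mulope}, and invokes Lemma \ref{mainlem} (Alcolado) in the form where $c_{\alpha\beta}^{\delta}\partial_\delta f$ is read as $(\partial_\alpha\circ\partial_\beta)f$; the associativity argument you sketch for the first WDVV family (the fully symmetric term $\partial_\alpha a\,\partial_\beta a\,\partial_\gamma a-((\partial_\alpha\circ\partial_\beta)\circ\partial_\gamma)a$ cancelling under $\alpha\leftrightarrow\gamma$) is then carried out once and for all inside that cited lemma, with no infinite sums appearing. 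So your ``main obstacle'' is an artifact of the coordinate formulation; since you neither resolve it nor replace it by the intrinsic operator formulation, the argument as proposed does not go through for $\mathcal{M}$.
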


The solutions $\Omega(\mathbf{u},s)$ and $\hat{\Omega}(\mathbf{u},s)$ induce a pair of flat F-manifold structures on $\mathcal{M}\times \mathbf{D}^{\mathrm{ext}}$ and $\mathcal{M}\times \mathbf{D}^{\mathrm{int}}$,  respectively. The explicit form of the principal hierarchies for these two structures, which constitutes our second main result, is presented in the following theorem.

\begin{thm}\label{mainthm2}
The principal hierarchies for the flat F-manifolds associated with the solutions  $\Omega(\mathbf{u},s)$ and $\hat{\Omega}(\mathbf{u},s)$ take the following form:
\begin{enumerate}
	\item[(i)]For the case of $\Omega(\mathbf{u},s)$:
\begin{align*}
	% --- 第 1 个公式 ---
	\frac{\partial}{\partial \widetilde{T}^{t_{i,l},p}} &= \frac{\partial}{\partial T^{t_{i,l},p}} - \frac{1}{(p+1)!} \frac{d_{i}}{l+d_{i}} \partial_{x}\biggl( \zeta(s)^{l/d_{i}} \bigl(a(s)^{p+1} -\hat{a}(s)^{p+1}\bigr)\mathbf{1}_{\gamma_{i}} \biggr)_{\!-} \partial_{s}, 
	\quad l\ne -d_{i}, \\[12pt]	
	% --- 第 2 个公式 ---
	\frac{\partial}{\partial \widetilde{T}^{h_{0,j},p}} &= \frac{\partial}{\partial T^{h_{0,j},p}} + \frac{\Gamma(1-j/n_{0})}{\Gamma(2+p-j/n_{0})} \partial_{x}\Bigl( a(s)^{1+p-j/n_{0}} \Bigr)_{\infty,\ge 0} \partial_{s}, \\[12pt]	
	% --- 第 3 个公式 ---
	\frac{\partial}{\partial \widetilde{T}^{h_{k,r},p}} &= \frac{\partial}{\partial T^{h_{k,r},p}} - \frac{\Gamma(1-r/n_{k})}{\Gamma(2+p-r/n_{k})} \partial_{x}\Bigl( \hat{a}(s)^{1+p-r/n_{k}} \Bigr)_{\varphi_{k},\le -1} \partial_{s}, 
	\quad r\ne n_{k}, \\[12pt]	
	% --- 第 4 个公式 (多行，拆解优化) ---
	\frac{\partial}{\partial \widetilde{T}^{t_{i,-d_{i}},p}} &= \frac{\partial}{\partial T^{t_{i,-d_{i}},p}} 
	- \partial_{x}\Biggl( d_{i} \frac{a^{p}(s)}{p!} \biggl(\log\frac{a(s)^{1/n_{0}}}{s-\varphi_{i}}-\frac{c_{p}}{n_{0}}\biggr) \Biggr)_{\infty,\ge 0} \partial_{s} \nonumber \\[3pt]
	&\quad + \partial_{x}\biggl( d_{i} \frac{a^{p}(s)}{p!} \log\frac{s-\varphi_{i}}{\zeta(s)^{1/d_{i}}}\mathbf{1}_{\gamma_{i}} \biggr)_{\!-} \partial_{s} \nonumber \\[3pt]
	&\quad - \partial_{x}\biggl( d_{i}\frac{a(s)^{p}}{p!} \log(s-\varphi_{i}) \biggr) \partial_{s} \nonumber \\[3pt]
	&\quad + \partial_{x}\biggl( \sum_{j\ne i} d_{i}\frac{a(s)^{p}}{p!} \log(s-\varphi_{i})\mathbf{1}_{\gamma_{j}} \biggr)_{\!-} \partial_{s}, \\[12pt]	
	% --- 第 5 个公式 (最长公式，精细拆行) ---
	\frac{\partial}{\partial \widetilde{T}^{h_{k,n_{k}},p}} &= \frac{\partial}{\partial T^{h_{k,n_{k}},p}} 
	- \partial_{x}\biggl( n_{k} \frac{a^{p}(s)}{p!} \log\frac{s-\varphi_{k}}{\zeta(s)^{1/d_{k}}}\mathbf{1}_{\gamma_{k}} \biggr)_{\!-} \partial_{s} \nonumber \\[3pt]
	&\quad + \partial_{x}\biggl( n_{k} \frac{\hat{a}^{p}(s)}{p!} \log\frac{s-\varphi_{k}}{\zeta(s)^{1/d_{k}}}\mathbf{1}_{\gamma_{k}} \biggr)_{\!-} \partial_{s} \nonumber \\[3pt]
	&\quad + \partial_{x}\Biggl( n_{k} \frac{a^{p}(s)}{p!} \biggl(\log\frac{a(s)^{1/n_{0}}}{s-\varphi_{k}}-\frac{c_{p}}{n_{0}}\biggr) \Biggr)_{\infty,\ge 0} \partial_{s} \nonumber \\[3pt]
	&\quad - \partial_{x}\Biggl( n_{k} \frac{\hat{a}^{p}(s)}{p!} \biggl(\log\bigl(\hat{a}(s)^{1/n_{k}}(s-\varphi_{k})\bigr)-\frac{c_{p}}{n_{k}}\biggr) \Biggr)_{\varphi_{k},\le -1} \partial_{s} \nonumber \\[3pt]
	&\quad + \partial_{x}\biggl( n_{k}\frac{a(s)^{p}}{p!} \log(s-\varphi_{k}) \biggr) \partial_{s} \nonumber \\[3pt]
	&\quad - \partial_{x}\biggl( \sum_{j\ne k} n_{k}\frac{a(s)^{p}}{p!} \log(s-\varphi_{k})\mathbf{1}_{\gamma_{j}} \biggr)_{\!-} \partial_{s}, \\[12pt]	
	% --- 第 6 个公式 ---
	\frac{\partial}{\partial \widetilde{T}^{s,p-1}} &= \partial_{x}\Bigl( \frac{a(s)^{p}}{p!} \Bigr) \partial_{s}.
\end{align*}
\item[(ii)] For the case of $\hat{\Omega}(\mathbf{u},s)$: 
\begin{align*}
	% --- 第 1 个公式 ---
	\frac{\partial}{\partial \widetilde{T}^{t_{i,l},p}} &= \frac{\partial}{\partial T^{t_{i,l},p}} + \frac{1}{(p+1)!} \frac{d_{i}}{l+d_{i}} \partial_{x}\biggl( \zeta(s)^{l/d_{i}} \bigl(a(s)^{p+1} -\hat{a}(s)^{p+1}\bigr)\mathbf{1}_{\gamma_{i}} \biggr)_{\!+} \partial_{s}, 
	\quad l\ne -d_{i}, \\[12pt] % <--- 组间：12pt (标准空行)	
	% --- 第 2 个公式 ---
	\frac{\partial}{\partial \widetilde{T}^{h_{0,j},p}} &= \frac{\partial}{\partial T^{h_{0,j},p}} + \frac{\Gamma(1-j/n_{0})}{\Gamma(2+p-j/n_{0})} \partial_{x}\Bigl( a(s)^{1+p-j/n_{0}} \Bigr)_{\infty,\ge 0} \partial_{s}, \\[12pt]	
	% --- 第 3 个公式 ---
	\frac{\partial}{\partial \widetilde{T}^{h_{k,r},p}} &= \frac{\partial}{\partial T^{h_{k,r},p}} - \frac{\Gamma(1-r/n_{k})}{\Gamma(2+p-r/n_{k})} \partial_{x}\Bigl( \hat{a}(s)^{1+p-r/n_{k}} \Bigr)_{\varphi_{k},\le -1} \partial_{s}, 
	\quad r\ne n_{k}, \\[12pt]	
	% --- 第 4 个公式 (多行) ---
	\frac{\partial}{\partial \widetilde{T}^{t_{i,-d_{i}},p}} &= \frac{\partial}{\partial T^{t_{i,-d_{i}},p}} 
	- \partial_{x}\Biggl( d_{i} \frac{a^{p}(s)}{p!} \biggl(\log\frac{a(s)^{1/n_{0}}}{s-\varphi_{i}}-\frac{c_{p}}{n_{0}}\biggr) \Biggr)_{\infty,\ge 0} \partial_{s} \nonumber \\[3pt] % <--- 组内：3pt (紧凑)
	&\quad - \partial_{x}\biggl( d_{i} \frac{a^{p}(s)}{p!} \log\frac{s-\varphi_{i}}{\zeta(s)^{1/d_{i}}}\mathbf{1}_{\gamma_{i}} \biggr)_{\!+} \partial_{s} \nonumber \\[3pt]
	&\quad - \partial_{x}\biggl( \sum_{j\ne i} d_{i}\frac{a(s)^{p}}{p!} \log(s-\varphi_{i})\mathbf{1}_{\gamma_{j}} \biggr)_{\!+} \partial_{s}, \\[12pt] % <--- 组间	
	% --- 第 5 个公式 (多行) ---
	\frac{\partial}{\partial \widetilde{T}^{h_{k,n_{k}},p}} &= \frac{\partial}{\partial T^{h_{k,n_{k}},p}} 
	+ \partial_{x}\biggl( n_{k} \frac{a^{p}(s)}{p!} \log\frac{s-\varphi_{k}}{\zeta(s)^{1/d_{k}}}\mathbf{1}_{\gamma_{k}} \biggr)_{\!+} \partial_{s} \nonumber \\[3pt] % <--- 组内
	&\quad - \partial_{x}\biggl( n_{k} \frac{\hat{a}^{p}(s)}{p!} \log\frac{s-\varphi_{k}}{\zeta(s)^{1/d_{k}}}\mathbf{1}_{\gamma_{k}} \biggr)_{\!+} \partial_{s} \nonumber \\[3pt]
	&\quad + \partial_{x}\Biggl( n_{k} \frac{a^{p}(s)}{p!} \biggl(\log\frac{a(s)^{1/n_{0}}}{s-\varphi_{k}}-\frac{c_{p}}{n_{0}}\biggr) \Biggr)_{\infty,\ge 0} \partial_{s} \nonumber \\[3pt]
	&\quad - \partial_{x}\Biggl( n_{k} \frac{\hat{a}^{p}(s)}{p!} \biggl(\log\bigl(\hat{a}(s)^{1/n_{k}}(s-\varphi_{k})\bigr)-\frac{c_{p}}{n_{k}}\biggr) \Biggr)_{\!\varphi_{k},\le -1} \partial_{s} \nonumber \\[3pt]
	&\quad + \partial_{x}\biggl( n_{k}\frac{\hat{a}(s)^{p}}{p!} \log(s-\varphi_{k})\mathbf{1}_{\gamma_{k}} \biggr) \partial_{s} \nonumber \\[3pt]
	&\quad + \partial_{x}\biggl( \sum_{j\ne k} n_{k}\frac{a(s)^{p}}{p!} \log(s-\varphi_{k})\mathbf{1}_{\gamma_{j}} \biggr)_{\!+} \partial_{s}, \\[12pt]	
	% --- 第 6 个公式 ---
	\frac{\partial}{\partial \widetilde{T}^{s,p-1}} &= \partial_{x}\Bigl( \frac{\hat{a}(s)^{p}}{p!} \Bigr) \partial_{s},
\end{align*}
\end{enumerate}
where $\frac{\partial}{\partial T^{u,p}}$ denotes the flow of the principal hierarchy for $\mathcal{M}$ corresponding to the coordinate $u \in \mathbf{u}$, as constructed in \cite{ma2024infinite}.
\end{thm}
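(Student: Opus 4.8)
The plan is to realise each flow as a component of the principal hierarchy attached to the flat F-manifold of Theorem~\ref{mainthm1}, and then to integrate the governing recursion into the stated closed forms. Recall that the solution $\Omega$ endows $\mathcal M\times\mathbf D^{\mathrm{ext}}$ with a flat F-manifold whose product, in the coordinates $\mathbf u\cup\{s\}$, is $\partial_\alpha\bullet\partial_\beta=\sum_\gamma c^\gamma_{\alpha\beta}\partial_\gamma+(\partial_\alpha\partial_\beta\Omega)\,\partial_s$ and $\partial_\alpha\bullet\partial_s=(\partial_\alpha\partial_s\Omega)\,\partial_s$, with $\partial_s\Omega=a(s)$. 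Its principal hierarchy is generated by descendant vector fields $X_{A,p}$ (here $A\in\mathbf u\cup\{s\}$) obeying $\partial_B X_{A,p}=X_{A,p-1}\bullet\partial_B$ with seed $X_{A,0}=\partial_A$, the flows being $\partial/\partial\widetilde T^{A,p}=\partial_x\big((X_{A,p+1})^{C}\big)\,\partial_{C}$ summed over all coordinates $C\in\mathbf u\cup\{s\}$. Writing $X_{A,p}=X^{\mathrm{cl}}_{A,p}+\Phi_{A,p}\,\partial_s$, the theorem amounts to (i) identifying the closed component of the flow with $\partial/\partial T^{\alpha,p}$ and (ii) producing the explicit open density $\Phi_{\alpha,p}$ (the coefficient of $\partial_s$), together with the purely open tower $\Phi_{s,p}$.

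For step (i) I would simply project the descendant recursion onto the $\mathcal M$-directions. Since $\partial_\mu\bullet\partial_s$ and $\partial_s\bullet\partial_s$ have vanishing closed component, the closed part $X^{\mathrm{cl}}_{\alpha,p}$ is $s$-independent and satisfies the unmodified recursion $\partial_\beta X^{\mathrm{cl}}_{\alpha,p}=X^{\mathrm{cl}}_{\alpha,p-1}\bullet_{\mathcal M}\partial_\beta$ with the same seed as on $\mathcal M$; hence it coincides with the $\mathcal M$-descendant and the closed flow is exactly $\partial/\partial T^{\alpha,p}$, as constructed in \cite{ma2024infinite}. This disposes of the first summand in every line of the theorem with no computation. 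Projecting the recursion onto $\partial_s$ instead gives the governing equations for the open densities,
\[
\partial_\beta\Phi_{\alpha,p}=\Omega_{\beta\mu}\,\eta^{\mu\nu}\partial_\nu\theta_{\alpha,p-1}+\Omega_{s\beta}\,\Phi_{\alpha,p-1},\qquad
\partial_s\Phi_{\alpha,p}=X^{\mathrm{cl}}_{\alpha,p-1}(a)+a'(s)\,\Phi_{\alpha,p-1},
\]
with $\Phi_{\alpha,0}=0$, while the open tower obeys $\partial_s\Phi_{s,p}=a'(s)\Phi_{s,p-1}$ with $\Phi_{s,1}=a(s)$, immediately giving $\Phi_{s,p}=a(s)^p/p!$ and hence the last flow.

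The heart of the argument is to integrate the $s$-recursion. The driving term $X^{\mathrm{cl}}_{\alpha,p-1}(a)$ is the derivative of the superpotential along the $\mathcal M$-descendant field, whose explicit projection form is available from \cite{ma2024infinite}; combined with the formula for $\partial_\alpha\partial_\beta\Omega$ in Theorem~\ref{mainthm1} this expresses the right-hand side entirely through $a,\hat a,\zeta,\ell$ and the projections. I would then recognise each candidate density as a single generating function acted on by the relevant projection: for the regular flows, $\tfrac{1}{(p+1)!}\tfrac{d_i}{l+d_i}\zeta^{l/d_i}(a^{p+1}-\hat a^{p+1})\mathbf 1_{\gamma_i}$, the powers $a^{1+p-j/n_0}$ and $\hat a^{1+p-r/n_k}$, and verify the recursion by differentiating in $s$ and matching term by term. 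The two mechanisms that make this work are $\partial_s(a^{p+1}-\hat a^{p+1})$ reproducing the lower density up to the factorial shift, and the commutation of $\partial_s$ with the projections $(\cdot)_{\pm},(\cdot)_{\infty,\ge0},(\cdot)_{\varphi_j,\le-1}$ up to boundary terms. The correct projection in each case is pinned down by requiring $\Phi_{\alpha,p}$ to be a well-defined function of $s$ on the relevant domain; the hatted solution $\hat\Omega$ runs identically with $a\leftrightarrow\hat a$, $\mathbf D^{\mathrm{ext}}\leftrightarrow\mathbf D^{\mathrm{int}}$ and the sign-flipped projections $(\cdot)_-\rightarrow(\cdot)_+$.

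I expect the decisive obstacle to be the two resonant flows $\widetilde T^{t_{i,-d_i},p}$ and $\widetilde T^{h_{k,n_k},p}$. Here the generic prefactor $\tfrac{d_i}{l+d_i}$ diverges as $l\to-d_i$ and the power $\zeta^{l/d_i}$ degenerates, so the density must be extracted as a limit, using $\tfrac{d}{d\epsilon}\zeta^{\epsilon}\big|_{\epsilon=0}=\log\zeta$ after subtracting the singular $\zeta^{-1}$ contribution; iterating the logarithmic recursion is what produces the harmonic constants $c_p=\sum_{k=1}^{p}1/k$, since $\partial_s\big(\tfrac{a^p}{p!}(\log a-c_p)\big)=a'\,\tfrac{a^{p-1}}{(p-1)!}(\log a-c_{p-1})$, along with the combinations $\log\frac{a^{1/n_0}}{s-\varphi_i}$, $\log\frac{s-\varphi_i}{\zeta^{1/d_i}}$, $\log\!\big(\hat a^{1/n_k}(s-\varphi_k)\big)$ of the statement. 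The genuinely delicate points are the winding numbers $d_j$ entering through the branch $\zeta^{1/d_j}$ on each contour, and the fact stressed after \eqref{1fun} that $(f\mathbf 1_{\gamma_j})_{\pm}$ need not vanish on $\gamma_{j'}$ for $j'\neq j$: this is precisely what forces the appearance of the un-projected logarithmic term and the cross-contour sums $\sum_{j\ne i}(\,\cdots\mathbf 1_{\gamma_j})_{\mp}$ in the last two flows. Verifying that these boundary and cross-contour contributions assemble exactly as written, so that $\Phi_{\alpha,p}$ remains single-valued, is the step requiring the most care.
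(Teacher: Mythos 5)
Your proposal is correct and follows essentially the same route as the paper: decompose the extended calibration into its closed part (which, as you note, obeys the unmodified $\mathcal{M}$-recursion and reproduces the flows $\partial/\partial T^{u,p}$ of \cite{ma2024infinite}) plus an open density multiplying $\partial_s$, then solve the projected recursions with explicit generating functions — powers for the regular flows, logarithms with the harmonic constants $c_p$ for the resonant flows $t_{i,-d_i}$ and $h_{k,n_k}$, including the cross-contour terms forced by $(f\mathbf{1}_{\gamma_j})_{\pm}$. The paper organizes the same computation as an ansatz through auxiliary functions $Q_{u,p},\widetilde{Q}_{u,p}$ satisfying $\partial_a Q_{u,p}+\partial_{\hat a}Q_{u,p}=Q_{u,p-1}$ and $d\widetilde{Q}_{u,p}/da=\widetilde{Q}_{u,p-1}$, verified against the recursion via Lemma \ref{lem:identities} and Theorem \ref{mainthm1}, which is exactly your $s$-recursion mechanism $\partial_s\bigl(\tfrac{a^p}{p!}(\log a-c_p)\bigr)=a'\tfrac{a^{p-1}}{(p-1)!}(\log a-c_{p-1})$ in a different packaging.
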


We now consider the finite-dimensional reduction of our construction. In the limit $\zeta(z)\to 0$, the infinite-dimensional Frobenius manifold $\mathcal{M}$ degenerates to a finite-dimensional Frobenius manifold (denoted $M$) defined on the space of rational functions of the form:
\[
\ell(z)= z^{n_0} + a_{n_0-2}z^{n_0-2} + \cdots + a_1 z + a_{0} + \sum_{i=1}^{m}\sum_{j=1}^{n_i}b_{i,j}(z-\varphi_{i})^{-j}.
\]
The flat coordinates for $M$ are denoted by $\mathbf{h}$, which correspond to the subset of $\mathbf{u}$ given by:
\[
\mathbf{h} = \{h_{0,j} \mid 1 \leq j \leq n_0 - 1\} \cup \{h_{k,r} \mid 1 \leq k \leq m, 0 \leq r \leq n_k\}.
\]
The principal hierarchy for this manifold corresponds to a subhierarchy of the principal hierarchy for $\mathcal{M}$, and coincides with the dispersionless limit of the constrained KP hierarchy for $m=1$ \cite{aoyama1996topological,liu2015central}. Our construction is compatible with this reduction, yielding the following corollary.

\begin{cor}\label{maincor4}
	In the limit $\zeta(z)\to 0$, the solutions $\Omega$ and $\hat{\Omega}$ both reduce to a single function $F^{o}(\mathbf{h},s)$, defined by the system
	\begin{align*}
		\partial_{\alpha}\partial_{\beta}F^{o}(\mathbf{h},s)&= \left(\frac{\partial_\alpha \ell(s) \partial_\beta \ell(s)}{\ell^{\prime}(s)}\right)_{\infty,\,\ge 0}+\sum_{j=1}^{m}\left(\frac{\partial_\alpha \ell(s) \partial_\beta \ell(s)}{\ell^{\prime}(s)}\right)_{\varphi_{j},\,\le -1}, \\
		\partial_{s}F^{o}(\mathbf{h},s)&=\ell(s),
	\end{align*}
	which provides a solution to the open WDVV equations for the Frobenius manifold $M$.

Moreover, the principal hierarchy for the original flat F-manifold restricts to the one associated with the F-manifold of $F^{o}(\mathbf{h},s)$, which takes the form:
\begin{align*}
	% --- 第 1 个公式 ---
	\frac{\partial}{\partial \widetilde{T}^{h_{0,j},p}} &= \frac{\partial}{\partial T^{h_{0,j},p}} + \frac{\Gamma(1-j/n_{0})}{\Gamma(2+p-j/n_{0})} \partial_{x}\Bigl( \ell(s)^{1+p-j/n_{0}} \Bigr)_{\infty,\ge 0} \partial_{s}, \\[12pt]
	% --- 第 2 个公式 ---
	\frac{\partial}{\partial \widetilde{T}^{h_{k,r},p}} &= \frac{\partial}{\partial T^{h_{k,r},p}} - \frac{\Gamma(1-r/n_{k})}{\Gamma(2+p-r/n_{k})} \partial_{x}\Bigl( \ell(s)^{1+p-r/n_{k}} \Bigr)_{\varphi_{k},\le -1} \partial_{s}, 
	\quad r\ne n_{k}, \\[12pt]
	% --- 第 3 个公式 (修正：所有 i 改为 k，l 改为 \ell) ---
	\frac{\partial}{\partial \widetilde{T}^{h_{k,n_{k}},p}} &= \frac{\partial}{\partial T^{h_{k,n_{k}},p}} 
	+ \partial_{x}\Biggl( n_{k} \frac{\ell(s)^{p}}{p!} \biggl(\log\biggl(\frac{\ell(s)^{1/n_{0}}}{s-\varphi_{k}}\biggr)-\frac{c_{p}}{n_{0}}\biggr) \Biggr)_{\infty,\ge 0} \partial_{s} \nonumber \\[3pt]
	&\quad - \partial_{x}\Biggl( n_{k} \frac{\ell(s)^{p}}{p!} \biggl(\log\bigl(\ell(s)^{1/n_{k}}(s-\varphi_{k})\bigr)-\frac{c_{p}}{n_{k}}\biggr) \Biggr)_{\varphi_{k},\le -1} \partial_{s} \nonumber \\[3pt]
	&\quad + \partial_{x}\biggl( n_{k}\frac{\ell(s)^{p}}{p!} \log(s-\varphi_{k}) \biggr) \partial_{s} \nonumber \\[3pt]
	&\quad - \partial_{x}\biggl( \sum_{j\ne k} n_{k}\frac{\ell(s)^{p}}{p!} \log(s-\varphi_{k}) \biggr)_{\varphi_{j},\le -1} \partial_{s}, \\[12pt]
	% --- 第 4 个公式 ---
	\frac{\partial}{\partial \widetilde{T}^{s,p-1}} &= \partial_{x}\Bigl( \frac{\ell(s)^{p}}{p!} \Bigr) \partial_{s},
\end{align*}
where $\frac{\partial}{\partial T^{u,p}}$ denotes the flow of the principal hierarchy for $M$ corresponding to $u \in \mathbf{h}$.
\end{cor}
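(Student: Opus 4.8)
The plan is to prove Corollary \ref{maincor4} in two stages, paralleling the two main theorems: first the reduction of the defining systems of $\Omega$ and $\hat\Omega$ to the single system for $F^o$ together with its open WDVV property, and then the reduction of the principal hierarchies of Theorem \ref{mainthm2}.

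For the first stage, I would restrict attention to flat coordinates $\alpha,\beta\in\mathbf{h}$ and exploit the fact that, in the construction of \cite{ma2024infinite}, the function $\zeta(z)$ is parametrized by the coordinates $t_{i,s}$ alone, so that $\partial_\alpha\zeta=0$ for every $\alpha\in\mathbf{h}$. Consequently the $\zeta$-projection terms in Theorem \ref{mainthm1} (the $(\cdot)_-$ term for $\Omega$ and the $-(\cdot)_+$ term for $\hat\Omega$) have identically vanishing numerator $\partial_\alpha\zeta\,\partial_\beta\zeta$ on the $\mathbf{h}$-directions. Passing to the locus $\zeta(z)\to0$ then gives $a(s),\hat a(s)\to\ell(s)$ via $a=\ell+\zeta_-$, $\hat a=\ell-\zeta_+$, so that both systems of Theorem \ref{mainthm1} collapse to the common system
\[
\partial_\alpha\partial_\beta F^o=\Bigl(\frac{\partial_\alpha\ell\,\partial_\beta\ell}{\ell'}\Bigr)_{\!\infty,\ge0}+\sum_{j}\Bigl(\frac{\partial_\alpha\ell\,\partial_\beta\ell}{\ell'}\Bigr)_{\!\varphi_j,\le-1},\qquad \partial_s F^o=\ell(s),
\]
which is precisely the asserted defining system; in particular $\Omega$ and $\hat\Omega$ share the same limit $F^o$.

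To show that $F^o$ solves the open WDVV equations for $M$, I would use a continuity argument rather than a fresh computation. By \cite{ma2024infinite} the Frobenius structure $(\eta,\circ)$ of $\mathcal M$ restricts, in the limit $\zeta\to0$, to that of $M$, so the structure constants $c^\mu_{\beta\gamma}$ and the metric $\eta$ converge to their counterparts on $M$; moreover every second derivative $\partial_\alpha\partial_\beta\Omega$ and $\partial_\alpha\partial_s\Omega$ converges to the corresponding derivative of $F^o$ by the first stage. Since the open WDVV equation $\partial_\alpha\partial_\mu\Omega\,\eta^{\mu\nu}\partial_\nu\partial_\beta\partial_\gamma F+\partial_\alpha\partial_s\Omega\,\partial_\beta\partial_\gamma\Omega$ (symmetrized in $\alpha\leftrightarrow\beta$) holds for $\Omega$ at all values of $\zeta$ by Theorem \ref{mainthm1}, each of its terms passes to the limit and yields the open WDVV equation for $M$ with open potential $F^o$. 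For the hierarchy, I would first observe that the coordinates $t_{i,l}$ are frozen on the reduction locus, so the flows $\partial/\partial\widetilde T^{t_{i,l},p}$ and $\partial/\partial\widetilde T^{t_{i,-d_i},p}$ drop out and only the $\mathbf{h}$-flows together with the open flow $\partial/\partial\widetilde T^{s,p-1}$ survive. The flows $\partial/\partial\widetilde T^{h_{0,j},p}$, $\partial/\partial\widetilde T^{h_{k,r},p}$ ($r\ne n_k$) and $\partial/\partial\widetilde T^{s,p-1}$ reduce immediately by substituting $a(s),\hat a(s)\to\ell(s)$, matching the corollary term by term; the only substantive computation is the limit of the six-term flow $\partial/\partial\widetilde T^{h_{k,n_k},p}$. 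There the two $\zeta$-dependent terms combine into a single expression with factor $\hat a(s)^p-a(s)^p=O(\zeta)$ times $\log\bigl((s-\varphi_k)/\zeta(s)^{1/d_k}\bigr)$, which vanishes in the limit because $\zeta\log\zeta\to0$, while the remaining four terms converge to those of the corollary once the projection $(\,\cdot\,\mathbf 1_{\gamma_j})_{-}$ is shown to degenerate to the principal-part truncation $(\,\cdot\,)_{\varphi_j,\le-1}$.

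The main obstacle I expect is precisely this last flow: controlling rigorously the $0\cdot\infty$ behaviour of the $\zeta$-terms while keeping track of the ordering of the $(\cdot)_-$ projection, the multiplication by $\mathbf 1_{\gamma_k}$, and the outer $\partial_x$, and justifying that as the contours $\gamma_j$ collapse the boundary projections $(\cdot)_\pm$ attached to the characteristic functions $\mathbf 1_{\gamma_j}$ become the Laurent truncations $(\cdot)_{\varphi_j,\le-1}$. A secondary point requiring care is the consistency check that the limits of the two a priori distinct hierarchies of parts (i) and (ii) of Theorem \ref{mainthm2} coincide; this is forced by $\Omega,\hat\Omega\to F^o$, but I would confirm it directly on the $h_{k,n_k}$ flow, where the $a$- and $\hat a$-based expressions genuinely differ before the limit is taken.
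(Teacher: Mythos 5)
Your route is genuinely different from the paper's, so let me first be precise about what the paper does: it never passes the open WDVV equations or the hierarchy formulas of Theorems \ref{mainthm1}--\ref{mainthm2} to the limit at all. It recalls from \cite{ma2023principal} the explicit Levi-Civita connection \eqref{cKPcon} and the dual multiplication data \eqref{kdvdulmet}--\eqref{kdvdulpro} of $M$, checks that these are precisely the $\zeta\to 0$ limits of \eqref{conn}, \eqref{etaom} and \eqref{mulope}, and then observes that the proofs of both theorems use nothing but these structures; hence the same arguments can be re-run verbatim on $M$ (Lemma \ref{mainlem} applied with $f=\ell(s)$, and the calibration ansatz with $Q_{u,p},\widetilde{Q}_{u,p}$ specialized to $a=\hat{a}=\ell$ and $u\in\mathbf{h}\cup\{s\}$), which yields the corollary directly. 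This structural re-derivation sidesteps every analytic difficulty your limiting argument encounters. Your first stage (that $\partial_{\alpha}\zeta=0$ for $\alpha\in\mathbf{h}$ kills the $\zeta$-terms and that both defining systems collapse to the one for $F^{o}$) is correct and unproblematic.

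There are, however, two genuine gaps in your version. First, in the continuity argument for the open WDVV equations: the equations on $\mathcal{M}$ contract over all $\delta\in\mathbf{u}$, i.e. they contain $c_{\alpha\beta}^{\delta}\partial_{\delta}\partial_{\gamma}\Omega$ and $c_{\alpha\beta}^{\delta}\partial_{\delta}\partial_{s}\Omega$ summed also over the infinitely many $t$-type directions, whereas the target equations on $M$ contract only over $\mathbf{h}$. The two systems therefore do not correspond term by term, and ``each of its terms passes to the limit'' is not yet an argument: you must show that the $t$-type contributions disappear. For the equation not involving $s$ this holds identically, because $\partial_{t_{i,l}}\ell=0$ and $\partial_{h}\zeta=0$ force $\partial_{t_{i,l}}\partial_{\gamma}\Omega=0$ by Theorem \ref{mainthm1} for $\gamma\in\mathbf{h}$ --- a check you never make; for the equation involving $s$ it is false before the limit, since $\partial_{t_{i,l}}\partial_{s}\Omega=\partial_{t_{i,l}}a(s)\ne 0$, and proving that the total $t$-type contribution (the action on $a(s)$ of the $\zeta$-moving part of $\partial_{\alpha}\circ\partial_{\beta}$, which is of order $\zeta$) vanishes in the limit amounts to exactly the comparison of the multiplication operators \eqref{mulope} and \eqref{kdvdulpro} that the paper performs anyway. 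Second, the hierarchy half of your proposal is a plan rather than a proof: the $O(\zeta\log\zeta)$ cancellation in the $h_{k,n_{k}}$ flow, the degeneration of the contour projections $(\,\cdot\,\mathbf{1}_{\gamma_j})_{-}$ into the Laurent truncations $(\cdot)_{\varphi_{j},\le -1}$ as the circles collapse, and the agreement of the limits of parts (i) and (ii) of Theorem \ref{mainthm2} are precisely the steps that carry the content of your approach, and you defer all three as ``expected obstacles.'' Either these must be carried out in detail, or you should adopt the paper's strategy, under which none of them arises.
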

%\begin{align*}
%	\frac{\p}{\p \widetilde{T}^{h_{0,j},p}}=&\frac{\p}{\p T^{h_{0,j},p}}+\frac{\Gamma\left(1-\frac{j}{n_{0}}\right)}{\Gamma\left(2+p-\frac{j}{n_{0}}\right)}\p_{x}(  \ell(s)^{1+p-\frac{j}{n_{0}}} )_{\infty,\ge 0}\p_{s},\quad \\
%	\frac{\p}{\p \widetilde{T}^{h_{k,r},p}}=&\frac{\p}{\p T^{h_{k,r},p}}-\frac{\Gamma\left(1-\frac{r}{n_{k}}\right)}{\Gamma\left(2+p-\frac{r}{n_{k}}\right)}\p_{x}( \ell(s)^{1+p-\frac{r}{n_{k}}} )_{\varphi_{k},\le -1}\p_{s},\quad r\ne n_{k},\\
%	\frac{\p}{\p \widetilde{T}^{h_{k,n_{k}},p}}=&\frac{\p}{\p T^{h_{k,n_{k}},p}}+\p_{x}\left(n_{k} \frac{\ell^{p}(s)}{p!}\left(\log\frac{\ell(s)^{\frac{1}{n_{0}}}}{s-\varphi_{i}}-\frac{c_{p}}{n_{0}}\right)\right)_{\infty,\ge 0}\p_{s}\\
%	&-\p_{x}\left(n_{k} \frac{\ell^{p}(s)}{p!}\left(\log\ell(s)^{\frac{1}{n_{k}}}(s-\varphi_{i})-\frac{c_{p}}{n_{k}}\right)\right)_{\varphi_{i},\le -1}\p_{s}\\
%	&+\p_{x}\left(n_{k}\frac{\ell(s)^{p}}{p!}\log(s-\varphi_{i})\right)\p_{s}-\p_{x}\left(\sum_{j\ne i}n_{k}\frac{\ell(s)^{p}}{p!}\log(s-\varphi_{i})\right)_{\varphi_{j},\le -1}\p_{s},\\
%	\frac{\p}{\p \widetilde{T}^{s,p-1}}=&\p_{x}(\frac{\ell(s)^{p}}{p!})\p_{s},
%\end{align*}

In the special case $m=0$, our solution for $M$ coincides with the polynomial solution for the A-type singularity constructed by Basalaev and Buryak \cite{basalaev2021open}. Their construction, based on open Saito theory, yields a solution whose series expansion in the variable $s$ reveals the transition functions between the flat and unfolding coordinates of the underlying Frobenius manifold. Furthermore, this solution is known to coincide with the generating function for intersection numbers in open r-spin theory \cite{buryak2024open}. In particular, for the case where $m=0$ and $n_{0}=2$, the principal hierarchy for our solution coincides with the dispersionless limit of the Burgers-KdV hierarchy \cite{buryak2015equivalence}, whose tau-tuple was studied in \cite{yang2021extension}.

Another type of reduction involves restricting to the fixed points of $\mathcal{M}$ under the involution $\iota$ on $\mathbb{P}^{1}$ defined by $z \to -z$. We assume the parameters and domains satisfy the symmetry conditions:
\[
m=2m'+1,\quad n_{0}=2n_{0}',\quad n_{1}=2n_{1}',\quad n_{2j-2}= n_{2j-1}=n_{j}',\quad 2\le j\le m'+1,
\]
and
\[
\iota(D_{1})=D_{1}, \quad \iota(D_{2j-2})=D_{2j-1},\quad 2\le j\le m'+1.
\]
With these assumptions, we denote by $\hat{\mathcal{M}}$ the submanifold of $\mathcal{M}$ consisting of the fixed points of $\iota$, that is, points $\vec{a}=(a(z),\hat{a}(z))\in \mathcal{M}$ satisfying:
\[
a(z)=a(-z),\quad\hat{a}(z)=\hat{a}(-z).
\]
The metric $\langle \cdot, \cdot \rangle_{\eta}$ on $\mathcal{M}$ can be restricted to $\hat{\mathcal{M}}$, which has a set of flat coordinates
\begin{align*}
	\hat{\mathbf{u}}=\hat{\mathbf{t}}\cup\hat{\mathbf{h}},
\end{align*}
where
\begin{align*}
	\hat{\mathbf{t}} &= \{t_{1,2l-1} \mid l\in \mathbb{Z}\} \cup \{t_{2i-2,l} \mid 2\le i\le m'+1, l\in\mathbb{Z}\}; \\
	\hat{\mathbf{h}} &= \{h_{0,2j-1} \mid 1\le j\le n_{0}'\} \cup \{h_{1,2j-1} \mid 1\le j\le n_{1}'\} \\
	&\quad \cup \{h_{2k-2,r} \mid 2\le k\le m'+1, 0\le r\le n_{k}'\}.
\end{align*}

\begin{cor}\label{maincor5}
	The Frobenius manifold structure on $\mathcal{M}$, as well as the subhierarchy defined by the flows:
	 $$
	 \{\frac{\p}{\p T^{t_{1,2l-1}}}\}_{s\in\mathbb{Z}}\cup\{\frac{\p}{\p T^{t_{2i-2,l}}}-\frac{\p}{\p T^{t_{2i-1,l}}}\}_{2\le i\le m'+1;l\in\mathbb{Z}}
	 $$
	 and
	 $$
	 \{\frac{\p}{\p T^{h_{0,2j-1}}}\}_{1\le j\le n_{0}'}\cup \{\frac{\p}{\p T^{h_{1,2j-1}}}\}_{1\le j\le n_{1}'}\cup \{\frac{\p}{\p T^{h_{2k-2,r}}}-\frac{\p}{\p T^{h_{2k-1,r}}}\}_{2\le k\le m'+1;0\le r\le n_{k}'}
	 $$
	can be restricted to $\hat{\mathcal{M}}$, constituting its Frobenius manifold structure and principal hierarchy. 
	
	Furthermore, the solutions $\Omega$ and $\hat{\Omega}$ remain solutions to the open WDVV equations upon restriction to $\hat{\mathcal{M}}$. The associated principal hierarchies are then given by the restriction of the corresponding flows (defined on $\mathcal{M}\times \mathbf{D}^{\mathrm{ext}}$ and $\mathcal{M}\times \mathbf{D}^{\mathrm{int}}$)
	 to $\hat{\mathcal{M}}\times \mathbf{D}^{\mathrm{ext}}$ and $\hat{\mathcal{M}}\times \mathbf{D}^{\mathrm{int}}$, respectively.
\end{cor}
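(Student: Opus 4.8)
The plan is to deduce the entire corollary from a single structural fact: the map $\iota^{\ast}$ sending a pair $(a(z),\hat a(z))$ to $(a(-z),\hat a(-z))$, together with the induced permutation $\gamma_1\mapsto\gamma_1$, $\gamma_{2i-2}\leftrightarrow\gamma_{2i-1}$ of the boundary contours (and the induced point conditions $\varphi_1=0$, $\varphi_{2i-1}=-\varphi_{2i-2}$), is an involutive automorphism of the Frobenius manifold $(\mathcal{M},\eta,\circ,e,E)$ whose fixed-point locus is exactly $\hat{\mathcal{M}}$. First I would verify that $\iota^{\ast}$ preserves $\eta$, $\circ$, $e$ and $E$. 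Since these are built from residue/contour pairings of $\zeta,\ell$ and their $s$-derivatives through the projections $(\cdot)_{\pm}$, $(\cdot)_{\infty,\ge 0}$, $(\cdot)_{\varphi_j,\le -1}$, this reduces to tracking the substitution $z\mapsto -z$: on the fixed locus $\zeta$ and $\ell$ become even so $\zeta',\ell'$ become odd, the differential $dz$ changes sign, the rotation $z\mapsto -z$ preserves contour orientation, and the characteristic functions transform by $\iota^{\ast}\mathbf{1}_{\gamma_{2i-2}}=\mathbf{1}_{\gamma_{2i-1}}$ and $\iota^{\ast}\mathbf{1}_{\gamma_1}=\mathbf{1}_{\gamma_1}$. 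The content is that these signs and the contour-pairing combine so that each defining pairing is left invariant.

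Granting that $\iota^{\ast}$ is an automorphism, the first assertion follows from the standard principle that the fixed-point set of an isometric automorphism of a Frobenius manifold is a totally geodesic Frobenius submanifold: $T\hat{\mathcal{M}}$ is the $(+1)$-eigenspace of $d\iota^{\ast}$, on which $\eta$ is nondegenerate; the product of two invariant vectors is invariant because $d\iota^{\ast}(X\circ Y)=d\iota^{\ast}X\circ d\iota^{\ast}Y$, so $\circ$ closes on $T\hat{\mathcal{M}}$; and $e,E$ are tangent because they are $\iota^{\ast}$-fixed. Flatness and the restriction of the flat coordinates to the listed set $\hat{\mathbf{u}}$ follow from total geodesy. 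For the principal hierarchy I would then show that each flow $\partial/\partial T^{u,p}$ is mapped by $\iota^{\ast}$ to $\pm\,\partial/\partial T^{\iota(u),p}$, the sign being dictated by the parity with which the corresponding flat coordinate transforms under $z\mapsto -z$. The combinations tangent to $\hat{\mathcal{M}}$ are then precisely those listed: the odd-indexed flows on the symmetric data $\gamma_1$ (namely $\partial/\partial T^{t_{1,2l-1}}$, $\partial/\partial T^{h_{0,2j-1}}$, $\partial/\partial T^{h_{1,2j-1}}$) and the differences $\partial/\partial T^{t_{2i-2,l}}-\partial/\partial T^{t_{2i-1,l}}$ and $\partial/\partial T^{h_{2k-2,r}}-\partial/\partial T^{h_{2k-1,r}}$ on the swapped data; their identification as the principal hierarchy of the restricted structure is then automatic.

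For the statements about $\Omega$ and $\hat\Omega$ I would argue by direct restriction rather than by passing to a fixed locus of the product, since all of $\mathbf{D}^{\mathrm{ext}}$ (resp. $\mathbf{D}^{\mathrm{int}}$) is retained. Substituting the even functions $a,\hat a,\zeta,\ell$ and the invariant directions $\alpha,\beta\in\hat{\mathbf{u}}$ into the defining equations of Theorem \ref{mainthm1}, the same sign bookkeeping as above shows that every term on the right-hand side is expressible through the restricted Frobenius data of $\hat{\mathcal{M}}$; consistency of the resulting overdetermined system then shows that $\Omega|_{\hat{\mathcal{M}}}$ and $\hat\Omega|_{\hat{\mathcal{M}}}$ are well defined and solve the open WDVV equations for $\hat{\mathcal{M}}$ (with $\partial_{s}\Omega=a(s)$ even in $s$ providing a consistency check). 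Feeding the same substitution into the flow formulas of Theorem \ref{mainthm2}, and using the invariant flow combinations identified in the previous paragraph, yields the asserted principal hierarchies of the restricted flat F-manifolds on $\hat{\mathcal{M}}\times\mathbf{D}^{\mathrm{ext}}$ and $\hat{\mathcal{M}}\times\mathbf{D}^{\mathrm{int}}$.

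I expect the main obstacle to be the sign-and-orientation bookkeeping of the first paragraph: because $(f\,\mathbf{1}_{\gamma_j})_{\pm}$ need not vanish on the other contours $\gamma_{j'}$, the contours cannot be treated independently, so showing that the swap $\gamma_{2i-2}\leftrightarrow\gamma_{2i-1}$ combined with $z\mapsto -z$ leaves each residue pairing invariant requires a careful global analysis of how $\iota^{\ast}$ commutes with the decomposition $f=f_{+}+f_{-}$. Once this commutation relation is established, the remaining steps are formal consequences of the fixed-point-locus principle and of direct substitution into the formulas of Theorems \ref{mainthm1} and \ref{mainthm2}.
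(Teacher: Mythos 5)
Your proposal is correct, but it is organized around a different mechanism than the paper's proof. The paper never introduces a global automorphism $\iota^{\ast}$ nor a fixed-point principle; instead it works with the dual description of the structures: it defines the subspace $\mathcal{H}^{\mathrm{odd}}\subset\mathcal{H}$ of odd functions, observes that the projections $(\cdot)_{\pm}$ commute with the involution $z\mapsto -z$ (so that $\omega_{\pm}(-z)=-\omega_{\pm}(z)$ for odd $\omega$), and then reads off from the explicit formulas \eqref{etaom} and \eqref{mulope} that the map $\eta$ and the operators $C_{\vec{\xi}}$ with $\vec{\xi}\in T_{\vec{a}}\hat{\mathcal{M}}$ restrict to maps $\mathcal{H}^{\mathrm{odd}}\times\mathcal{H}^{\mathrm{odd}}\to T_{\vec{a}}\hat{\mathcal{M}}$; the Frobenius structure, principal hierarchy, and restricted solutions on $\hat{\mathcal{M}}$ are then obtained by re-running the construction of \cite{ma2024infinite} with these restricted maps. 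Your route --- checking that $\iota^{\ast}$ preserves $\eta$, $\circ$, $e$, $E$, invoking that the fixed locus of an isometric automorphism is a totally geodesic Frobenius submanifold, and using the equivariance $\iota^{\ast}\bigl(\partial/\partial T^{u,p}\bigr)=\pm\,\partial/\partial T^{\iota(u),p}$ to select the invariant flow combinations --- is a legitimate and more conceptual alternative that isolates the symmetry reason behind the statement; what it does not produce is the explicit restricted dual structure ($\eta$ and $C_{\vec{\xi}}$ on $\mathcal{H}^{\mathrm{odd}}$), which the paper needs anyway when it degenerates $\hat{\mathcal{M}}$ to $\hat{M}$ in Corollary \ref{maincor6}. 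Three smaller points. First, the ``main obstacle'' you anticipate is in fact a one-line observation: since $\iota$ maps $\mathbf{D}^{\mathrm{int}}$ and $\mathbf{D}^{\mathrm{ext}}$ to themselves and fixes $\infty$, uniqueness of the decomposition $f=f_{+}+f_{-}$ immediately gives that $(\cdot)_{\pm}$ commutes with $\iota^{\ast}$ --- this is exactly the commutation relation the paper states and uses. Second, in the hierarchy step you should note that calibrations are determined by \eqref{recur} only up to flat vector fields, so the equivariance of the specific normalized calibrations of \cite{ma2024infinite} (which involve fractional powers and logarithms) deserves an explicit check rather than being ``automatic.'' Third, in this infinite-dimensional setting the fixed-point principle should be justified by the direct computation that $\nabla_{X}Y$ and $X\circ Y$ are $d\iota^{\ast}$-invariant for invariant $X,Y$ (which is what the paper's restriction of \eqref{conn} and \eqref{mulope} amounts to), rather than by geodesic arguments, whose validity here is not guaranteed.
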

The Frobenius manifold structure on $\hat{\mathcal{M}}$, which is a generalization of that underlying the two-component BKP hierarchy \cite{wu2012class} (corresponding to the case $m'=0$), degenerates in the limit $\zeta(z)\to 0$ to the corresponding structure on the space $\hat{M}$ of rational functions of the form:
\[
\ell(z)=z^{2n_{0}'}+\sum_{j=0}^{n_{0}'-1}b_{0,j}z^{2j}+\sum_{j=1}^{n_{1}'}b_{1,j}z^{-2j}+\sum_{i=2}^{m'+1}\sum_{j=1}^{n_{i}'}b_{i,j}(z^2-b_{i,0})^{-j}.
\]
This structure coincides with that on the orbit space of the Coxeter group of D type \cite{zuo2007frobenius}, which corresponds to the case $m'=0$ and $n'_{1}=1$. Furthermore, the principal hierarchy for this case is the dispersionless limit of the Drinfeld-Sokolov hierarchy of D type \cite{drinfel1985lie,liu2011drinfeld}.

\begin{cor}\label{maincor6}
	In the limit $\zeta(z)\to 0$, the solutions $\Omega$ and $\hat{\Omega}$ (which remain solutions for $\hat{\mathcal{M}}$) both reduce to a single function $F^{o}(\hat{\mathbf{h}},s)$, which provides a solution to the open WDVV equations for $\hat{M}$. Furthermore, the principal hierarchy for the associated flat F-manifold is obtained by restricting the corresponding flows
	\[
	\left\{\frac{\partial}{\partial \widetilde{T}^{h_{0,2j-1}}}\right\}_{1\le j\le n_{0}'}\cup \left\{\frac{\partial}{\partial \widetilde{T}^{h_{1,2j-1}}}\right\}_{1\le j\le n_{1}'}\cup \left\{\frac{\partial}{\partial \widetilde{T}^{h_{2k-2,r}}}-\frac{\partial}{\partial \widetilde{T}^{h_{2k-1,r}}}\right\}_{2\le k\le m'+1;\, 0\le r\le n_{k}'}
	\]
	(defined on $\mathcal{M}\times \mathbf{D}^{\mathrm{ext}}$ or $\mathcal{M}\times \mathbf{D}^{\mathrm{int}}$) to $\hat{\mathcal{M}}$ and subsequently taking the limit $\zeta(z) \to 0$.
\end{cor}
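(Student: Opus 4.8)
The plan is to obtain Corollary~\ref{maincor6} by composing the two reductions already in hand---the $\mathbb{Z}_2$-restriction of Corollary~\ref{maincor5} and the $\zeta(z)\to 0$ degeneration of Corollary~\ref{maincor4}---after checking that the two operations commute. First I would use Corollary~\ref{maincor5} to restrict $\Omega$ and $\hat\Omega$, together with the flows of Theorem~\ref{mainthm2}, to $\hat{\mathcal M}$, where they are genuine open WDVV solutions with their principal hierarchies; then I would run the degeneration argument of Corollary~\ref{maincor4} verbatim, but on $\hat{\mathcal M}$ in place of $\mathcal M$. The point that makes this legitimate is that the involution $\iota:z\mapsto -z$ is compatible with the decomposition $a=\ell+\zeta_-$, $\hat a=\ell-\zeta_+$ and with the limit $\zeta\to0$: since $\iota$ preserves the form of $\ell$ and merely precomposes $\zeta$ with $\iota$, the fixed-point conditions $a(z)=a(-z)$ and $\hat a(z)=\hat a(-z)$ hold along the entire degeneration family, so that restricting and then degenerating produces the same object as degenerating and then restricting to $\hat M$.

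For the solution itself, I would argue that the limiting function is the restriction of $F^o(\mathbf h,s)$ from $M$ to $\hat M$. By Corollary~\ref{maincor4} both $\Omega$ and $\hat\Omega$ tend to the single $F^o(\mathbf h,s)$ determined by $\partial_s F^o=\ell(s)$ and the second-order system built only from the $\ell$-projections; these defining relations involve neither $\zeta$ nor the individual functions $a,\hat a$, and depend on the point only through $\ell$, so they descend without change once $\ell$ is taken of the even form displayed before the corollary. It then remains to match coordinates and poles: the $\iota$-fixed pole $\varphi_1$ supplies the even part $\sum_j b_{1,j}z^{-2j}$ directly, while each conjugate pair $\varphi_{2k-2},\,-\varphi_{2k-2}=\varphi_{2k-1}$ fuses into a single pole of $\ell$ in the variable $z^2-b_{k,0}$. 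Under this fusion the two $\iota$-conjugate principal parts $(\cdot)_{\varphi_{2k-2},\le-1}$ and $(\cdot)_{\varphi_{2k-1},\le-1}$ combine into the principal part at the fused pole, so the right-hand side of the defining system restricts to the intrinsic open WDVV data of $\hat M$; this shows $F^o(\hat{\mathbf h},s)$ solves the open WDVV equations for $\hat M$.

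For the principal hierarchy I would track how each $\iota$-symmetric coordinate combination listed in Corollary~\ref{maincor5} transforms under restriction and the subsequent limit. Because $\ell$ is even on $\hat M$, the generating densities $\ell(s)^{1+p-j/n_0}$, $\ell(s)^{1+p-r/n_k}$ and $\ell(s)^p$ are $\iota$-invariant, so the flows attached to $\{h_{0,2j-1}\}$ and $\{h_{1,2j-1}\}$ survive individually, whereas each paired flow $\partial/\partial\widetilde T^{h_{2k-2,r}}-\partial/\partial\widetilde T^{h_{2k-1,r}}$ combines the two conjugate projections at $\varphi_{2k-2}$ and $\varphi_{2k-1}$ into one projection at the fused pole. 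Passing to $\zeta\to0$ removes the $\zeta^{1/d}$ factors and the $\mathbf 1_{\gamma_i}$-projections that distinguish the interior from the exterior construction, and $\Omega,\hat\Omega$ collapse to a single flat F-manifold structure on $\hat M$; by the commuting-square property this collapsed hierarchy coincides with the one attached to $F^o(\hat{\mathbf h},s)$.

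The hard part will be the bookkeeping of the logarithmic and winding-number terms. The power-type densities are manifestly even and cause no trouble, but the logarithmic pieces in the $h_{k,n_k}$ and $t_{i,-d_i}$ flows contain $\log(s-\varphi_k)$ together with the characteristic functions $\mathbf 1_{\gamma_j}$, whose $\iota$-images are supported on the conjugate contour $\gamma_{j'}$. I expect the main obstacle to be verifying that the prescribed antisymmetric combinations of conjugate flows cancel the non-invariant halves in pairs and assemble $\log(s-\varphi_k)+\log(s+\varphi_k)=\log(s^2-\varphi_k^2)$ into the logarithm of the fused variable, with the leftover constants absorbed into the $c_p$-normalisation; confirming this cancellation is exactly what identifies the restrict-then-degenerate flows with the principal hierarchy of the flat F-manifold of $F^o(\hat{\mathbf h},s)$, completing the proof.
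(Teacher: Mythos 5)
Your proposal is correct and takes essentially the same route as the paper: the paper's (very brief) proof also treats Corollary~\ref{maincor6} as the Corollary~\ref{maincor5} restriction followed by the Corollary~\ref{maincor4} degeneration argument, reduced to a direct comparison of the explicit connection and multiplication operator on $\hat{\mathcal{M}}$ in the limit $\zeta(z)\to 0$ with their counterparts on $\hat{M}$ as given in \cite{ma2023principal}. The pole-fusion and logarithmic bookkeeping you flag as the hard part are precisely the details the paper omits ``for brevity''.
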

In the special case where $m'=0$ and $n'_{1}=1$, our solution for $\hat{M}$ coincides with the polynomial solution for the D-type singularity constructed by Basalaev and Buryak \cite{basalaev2021open}. We hope that our result will inspire the construction of a dispersive hierarchy for this case, analogous to the A-type case, or more generally, an open-type Drinfeld-Sokolov hierarchy.

This paper is organized as follows. In Section 2, we review the essential aspects of the infinite-dimensional Frobenius manifold structure on $\mathcal{M}$ that are necessary for the proofs of our main theorems. Section 3 is devoted to the proofs of Theorem \ref{mainthm1} and Theorem \ref{mainthm2}. In Section 4, we study the reductions of our construction, proving Corollaries \ref{maincor4}--\ref{maincor6}.

\section{Frobenius manifold structure on $\mathcal{M}$}
In this section, we recall the construction of the infinite-dimensional Frobenius manifold structure on $\mathcal{M}$ introduced in \cite{ma2024infinite}, focusing on the explicit formulas that are essential for the proofs of our main theorems.

\subsection{Invariant metric}
Given a point $\vec{a}=(a(z),\hat{a}(z))\in\mathcal{M}$, we identify any vector $X$ in the tangent space $T_{\vec{a}}\mathcal{M}$ with the derivative $(\partial_{X} a(z), \partial_{X}\hat{a}(z))$. The invariant metric on $\mathcal{M}$ is defined by
\begin{equation}\label{whimetric}
	\langle X_1, X_2 \rangle_{\eta} =
	-\frac{1}{2\pi\mathrm{i}} \sum_{j=1}^{m} \oint_{\gamma_{j}} \frac{\partial_{1}\zeta(z) \partial_{2}\zeta(z)}{\zeta'(z)} \, dz
	-\left(\operatorname*{Res}_{z=\infty} + \sum_{j=1}^{m} \operatorname*{Res}_{z=\varphi_{j}}\right) \frac{\partial_{1}\ell(z) \partial_{2}\ell(z)}{\ell'(z)} \, dz,
\end{equation}
where $\partial_{\nu}=\partial_{X_{\nu}}$ for $\nu=1,2$, and the components $\zeta(z), \ell(z)$ are as defined in \eqref{zetaell}.

The dual description of the metric is given by a linear map \( \eta: \mathcal{H} \times \mathcal{H} \to T_{\vec{a}}\mathcal{M} \) as
\begin{align}
	\eta(\omega(z), \hat{\omega}(z))=&\bigl(a'(z)(\omega(z)+\hat{\omega}(z))_{-}-(\omega(z) a'(z)+\hat{\omega}(z)\hat{a}'(z))_{-},\nn\\
	&-\hat{a}'(z)(\omega(z)+\hat{\omega}(z))_{+}+(\omega(z) a'(z)+\hat{\omega}(z)\hat{a}'(z))_{+}\bigr), \label{etaom}
\end{align}
which satisfies
\begin{equation}\label{pairmec}
	\langle\vec{\omega},X\rangle=\langle\eta (\vec{\omega}),X\rangle_{\eta},
\end{equation}
where
\begin{equation}\label{whipair}
	\langle \vec{\omega}, X \rangle := \frac{1}{2\pi \mathrm{i}} \sum_{j=1}^{m} \oint_{\gamma_j} \left( \omega(z) \partial_X a(z) + \hat{\omega}(z) \partial_X \hat{a}(z) \right) \, dz.
\end{equation}
The surjectivity of the map $\eta$, established in \cite{ma2024infinite}, allows for the identification of the cotangent space with the quotient space:
\begin{equation}\label{cotspace}
	T^{\ast}_{\vec{a}}\mathcal{M}=( \mathcal{H} \times \mathcal{H} )/\ker \eta.
\end{equation}

\subsection{Flat coordinates}
The flatness of the metric $\eta$ was established by explicitly constructing the flat coordinates $\mathbf{u}$ (see \eqref{flatcor}), which are defined as the coefficients of the Laurent expansions of the inverse functions of $\zeta(z)$ and $\ell(z)$, as follows:
\begin{align*}
	z &= \sum_{s \in \mathbb{Z}} t_{j,s} \zeta(z)^{s/d_{j}}, \quad z \in \gamma_j; \\
	z &= \ell(z)^{1/n_0} - h_{0,1} \ell(z)^{-1/n_0} - h_{0,2} \ell(z)^{-2/n_0} - \cdots, \quad z \rightarrow \infty; \\
	z &=  h_{j,0} + h_{j,1} \ell(z)^{-1/n_j} + h_{j,2} \ell(z)^{-2/n_j} + \cdots, \quad z \rightarrow \varphi_{j}.
\end{align*}
The tangent vectors corresponding to these coordinates can be computed explicitly:
\begin{align*}
	\frac{\partial \vec{a}}{\partial t_{i,s}} &= \left(-(\zeta(z)^{s/d_{i}}\zeta'(z)\mathbf{1}_{\gamma_{i}})_{-}, (\zeta(z)^{s/d_{i}}\zeta'(z)\mathbf{1}_{\gamma_{i}})_{+}\right),\quad 1\le i\le m, ~ s\in\mathbb{Z}; \\
	\frac{\partial \vec{a}}{\partial h_{0,j}} &= \left((\ell'(z)\ell(z)^{-j/n_0})_{\infty,\,\ge 0}, (\ell'(z)\ell(z)^{-j/n_0})_{\infty,\,\ge 0}\right),\quad 1\le j\le n_{0}-1; \\
	\frac{\partial \vec{a}}{\partial h_{k,r}} &= \left(-(\ell'(z)\ell(z)^{-r/n_k})_{\varphi_{k},\,\le -1}, -(\ell'(z)\ell(z)^{-r/n_k})_{\varphi_{k},\,\le -1}\right),\quad 1\le k\le m, ~ 0\le r\le n_{k}.
\end{align*}
The only non-vanishing components of the metric in this basis are:
\begin{align*}
	\left\langle\frac{\partial}{\partial t_{i,s}}, \frac{\partial}{\partial t_{i,s'}}\right\rangle_\eta &= -d_{i} \delta_{-d_{i}, s+s'}, \\
	\left\langle\frac{\partial}{\partial h_{0,j}}, \frac{\partial}{\partial h_{0,j'}}\right\rangle_\eta &= n_{0} \delta_{n_{0}, j+j'}, \\
	\left\langle\frac{\partial}{\partial h_{k,r}}, \frac{\partial}{\partial h_{k,r'}}\right\rangle_\eta &= n_{k} \delta_{n_{k}, r+r'}.
\end{align*}

%{\small
%\begin{align}
%	\left(\nabla_{\partial_1} \partial_2\right) (\vec{a})=  \bigg(&\partial_1 \partial_2 a(z)-\left(\frac{\partial_1 \zeta(z) \partial_2 \zeta(z)}{\zeta^{\prime}(z)}\right)_{-}^{\prime} -\left(\frac{\partial_1 \ell(z) \partial_2 \ell(z)}{\ell^{\prime}(z)}\right)_{\infty,\,\ge 0}^{\prime} 
%	-\sum_{j=1}^{m}\left(\frac{\partial_1 \ell(z) \partial_2 \ell(z)}{\ell^{\prime}(z)}\right)_{\varphi_{j},\,\le -1}^{\prime},\nn\\
%	& \partial_1 \partial_2 \hat{a}(z)+\left(\frac{\partial_1 \zeta(z) \partial_2 \zeta(z)}{\zeta^{\prime}(z)}\right)_{+}^{\prime}
%	-\left(\frac{\partial_1 \ell(z) \partial_2 \ell(z)}{\ell^{\prime}(z)}\right)_{\infty,\,\ge 0}^{\prime}-\sum_{j=1}^{m}\left(\frac{\partial_1 \ell(z) \partial_2 \ell(z)}{\ell^{\prime}(z)}\right)_{\varphi_{j},\,\le -1}^{\prime}\bigg). \label{conn}
%\end{align}}
\subsection{Levi-Civita connection}
The Levi-Civita connection $\nabla$ associated with the metric $\langle \cdot, \cdot \rangle_{\eta}$ is given by the following expression for the covariant derivative of a vector field \(\partial_2\) along \(\partial_1\):
\begin{align} \label{conn}
	(\nabla_{\partial_1} \partial_2) (\vec{a}) = \Biggl(
	& \partial_1 \partial_2 a(z) - \biggl(\frac{\partial_1 \zeta(z) \partial_2 \zeta(z)}{\zeta'(z)}\biggr)_{\!-}' 
	- \biggl(\frac{\partial_1 \ell(z) \partial_2 \ell(z)}{\ell'(z)}\biggr)_{\!\infty,\,\ge 0}' \nonumber \\
	&\quad - \sum_{j=1}^{m}\biggl(\frac{\partial_1 \ell(z) \partial_2 \ell(z)}{\ell'(z)}\biggr)_{\!\varphi_{j},\,\le -1}', \quad \nonumber \\[8pt]
	& \partial_1 \partial_2 \hat{a}(z) + \biggl(\frac{\partial_1 \zeta(z) \partial_2 \zeta(z)}{\zeta'(z)}\biggr)_{\!+}' 
	- \biggl(\frac{\partial_1 \ell(z) \partial_2 \ell(z)}{\ell'(z)}\biggr)_{\!\infty,\,\ge 0}' \nonumber \\
	&\quad - \sum_{j=1}^{m}\biggl(\frac{\partial_1 \ell(z) \partial_2 \ell(z)}{\ell'(z)}\biggr)_{\!\varphi_{j},\,\le -1}' \Biggr).
\end{align}
\subsection{Multiplication structure}\label{secmul}
The Frobenius manifold structure includes a commutative and associative multiplication $\circ$ on each tangent space $T_{\vec{a}}\mathcal{M}$, which is constructed in two steps. First, a commutative and associative multiplication is defined on the ambient space $\mathcal{H} \times \mathcal{H}$ by the formula:
\begin{align*}
	\vec{\omega}_1 \circ \vec{\omega}_{2} = \Bigl(
	&\omega_{2}(\omega_{1}a')_{+} - \omega_{1}(\omega_{2}a')_{-} - \omega_{2}(\hat{\omega}_{1}\hat{a}')_{-} - \omega_{1}(\hat{\omega}_{2}\hat{a}')_{-}, \\
	&\hat{\omega}_{2}(\hat{\omega}_{1}\hat{a}')_{+} - \hat{\omega}_{1}(\hat{\omega}_{2}\hat{a}')_{-} + \hat{\omega}_{1}(\omega_{2}a')_{+} + \hat{\omega}_{2}(\omega_{1}a')_{+} \Bigr),
\end{align*}
which then induces the multiplication on the tangent space via the map $\eta$:
\begin{equation}\label{mul}
	\vec{\xi}_{1} \circ \vec{\xi}_{2} = \eta \left(\eta^{-1}(\vec{\xi}_{1}) \circ \eta^{-1}(\vec{\xi}_{2}) \right),\quad \vec{\xi}_{1},\vec{\xi}_{2}\in T_{\vec{a}}\mathcal{M}.
\end{equation}
This definition is well-defined, as it is independent of the choice of preimages, and possesses an identity element $e$ given by:
\[
e =
\begin{cases}
	\displaystyle \sum_{i=1}^{m}\left(\frac{\partial}{\partial t_{i,0}} + \frac{\partial}{\partial h_{i,0}}\right), & n_{0} = 1; \\[10pt]
	\displaystyle \frac{1}{n_{0}} \frac{\partial}{\partial h_{0,n_{0}-1}}, & n_{0} \geq 2.
\end{cases}
\]
The action of this multiplication is also described by a family of linear maps $C_{\vec{\xi}}: T^{\ast}_{\vec{a}}\mathcal{M} \to T_{\vec{a}}\mathcal{M}$, indexed by a tangent vector $\vec{\xi}=(\xi, \hat{\xi}) \in T_{\vec{a}}\mathcal{M}$. This operator is defined for a covector $\vec{\omega}=(\omega, \hat{\omega})$ by the formula:
\begin{equation}\label{mulope}
	C_{\vec{\xi}} (\vec{\omega}) = \Bigl(a'(\omega \xi + \hat{\omega} \hat{\xi})_{-} - \xi (\omega a' + \hat{\omega} \hat{a}')_{-}, \ -\hat{a}'(\omega \xi + \hat{\omega} \hat{\xi})_{+} + \hat{\xi} (\omega a' + \hat{\omega} \hat{a}')_{+}\Bigr).
\end{equation}
It is related to the tangent space multiplication via the identity:
\begin{equation}\label{muloped}
	C_{\vec{\xi}} (\vec{\omega})=\vec{\xi}\circ\eta(\vec{\omega}).
\end{equation}

\subsection{Principal hierarchy}
The Frobenius manifold $\mathcal{M}$ is naturally endowed with the principal hierarchy, which is a system of commuting Hamiltonian flows defined on the loop space of $\mathcal{M}$, denoted by $\frac{\partial}{\partial T^{u,p}}$ for each coordinate $u \in \mathbf{u}$ and integer $p \ge 0$. For the sake of brevity, we list only two representative sets of these flows:
\begin{align*}
	\frac{\partial\vec{a}}{\partial T^{h_{0,j},p}} &= \frac{\Gamma(1-j/n_{0})}{\Gamma(2+p-j/n_{0})} \biggl( \Bigl\{ (a^{1+p-j/n_{0}})_{\infty,\ge 0}, a \Bigr\}, \Bigl\{ (a^{1+p-j/n_{0}})_{\infty,\ge 0}, \hat{a} \Bigr\} \biggr); \\[10pt] % <--- 增加间距到 12pt
	\frac{\partial\vec{a}}{\partial T^{h_{k,r},p}} &= \frac{\Gamma(1-r/n_{k})}{\Gamma(2+p-r/n_{k})} \biggl( \Bigl\{ -( \hat{a}^{1+p-r/n_{k}} )_{\varphi_{k}, \le -1}, a \Bigr\}, \Bigl\{ -( \hat{a}^{1+p-r/n_{k}} )_{\varphi_{k}, \le -1}, \hat{a} \Bigr\} \biggr),
\end{align*}
where $r\neq n_{k}$ and the bracket $\{ \cdot, \cdot \}$ for two functions $f(z)$ and $g(z)$ is defined by
\begin{equation}\label{bracket}
	\{f, g\} := \frac{\partial f}{\partial z} \frac{\partial g}{\partial x} - \frac{\partial g}{\partial z} \frac{\partial f}{\partial x}.
\end{equation}
The remaining flows, particularly those corresponding to the coordinates $t_{i,s}$ and $h_{k,n_{k}}$, are considerably more involved; their full explicit forms can be found in \cite{ma2024infinite}. A key result established in \cite{ma2024infinite} is that the genus-zero Whitham hierarchy is a subhierarchy of this principal hierarchy.

\section{Proofs of the Main Theorems}
In this section, we provide the proofs of our main results, Theorem \ref{mainthm1} and Theorem \ref{mainthm2}. We begin by reviewing the essential concepts of flat F-manifolds and their associated principal hierarchies.

\subsection{Flat F-manifolds and principal hierarchy}
First, we recall the definition of a flat F-manifold, following \cite{manin2005f,arsie2018flat,alcolado2017extended}. A flat F-manifold is a quadruple $(\mathcal{N},\nabla,\circ,e)$ consisting of an analytic manifold $\mathcal{N}$, a flat and torsion-free connection $\nabla$ on its tangent bundle, a commutative and associative multiplication $\circ$ on each tangent space, and a unity vector field $e$ satisfying $\nabla e=0$. Furthermore, the multiplication must be expressible in terms of a vector potential $\Psi$ as
$$X\circ Y=[X,[Y,\Psi]]$$
for any flat vector fields $X$ and $Y$.

A primary source of such structures is the set of open WDVV equations:
$$c_{\alpha\beta}^{\delta}\frac{\partial F^{o}(\mathbf{t},s)}{\partial t^{\delta}\partial t^{\gamma}}+\frac{\partial F^{o}(\mathbf{t},s)}{\partial t^{\alpha}\partial t^{\beta}}\frac{\partial F^{o}(\mathbf{t},s)}{\partial t^{\gamma}\partial s}=c_{\beta\gamma}^{\delta}\frac{\partial F^{o}(\mathbf{t},s)}{\partial t^{\delta}\partial t^{\alpha}}+\frac{\partial F^{o}(\mathbf{t},s)}{\partial t^{\beta} \partial t^{\gamma}}\frac{\partial F^{o}(\mathbf{t},s)}{\partial t^{\alpha}\partial s}$$
and
$$c_{\alpha\beta}^{\delta}\frac{\partial F^{o}(\mathbf{t},s)}{\partial t^{\delta} \partial s}+\frac{\partial F^{o}(\mathbf{t},s)}{\partial t^{\alpha} \partial t^{\beta}}\frac{\partial F^{o}(\mathbf{t},s)}{\partial s\partial s}=\frac{\partial F^{o}(\mathbf{t},s)}{\partial t^{\alpha} \partial s}\frac{\partial F^{o}(\mathbf{t},s)}{\partial t^{\beta}\partial s},$$
where $c_{\alpha\beta}^{\delta}$ are the structure constants of an underlying Frobenius manifold $N$ (we assume the Einstein summation convention for repeated indices). As observed by P. Rossi, any solution $F^{o}(\mathbf{t},s)$ to these equations defines a flat F-manifold structure on the product space $N\times U$, where $U$ is a certain domain in $\mathbb{C}$ with coordinate $s$. The vector potential for this structure is given by
\begin{equation}\label{vecpot2}
	\Psi=\frac{\partial F(\mathbf{t})}{\partial t^{\alpha}}\eta^{\alpha\beta}\frac{\partial}{\partial t^{\beta}}+F^{o}(\mathbf{t},s)\frac{\partial}{\partial s}.
\end{equation}
Here, $F(\mathbf{t})$ is the prepotential of $N$.

The notion of a principal hierarchy extends to this setting\cite{arsie2018flat}. A calibration for a flat F-manifold is a set of vector fields $\{\Theta_{t^{\alpha},p}\}_{p\ge 0}$ satisfying the recursive relation
\begin{equation}\label{recur}
	\nabla_{X}\Theta_{t^{\alpha},p+1}= \Theta_{t^{\alpha},p}\circ X,
\end{equation}
for any vector field $X$, with the initial condition $\Theta_{t^{\alpha},0}=\frac{\partial}{\partial t^{\alpha}}$, where $\mathbf{t}=\{t^\alpha\}$ are the flat coordinates of the connection $\nabla$. The associated principal hierarchy is then the system of flows on the loop space of $\mathcal{N}$ defined by:
\begin{equation}\label{fpridef}
	\frac{\partial t^{\alpha}(x)}{\partial T^{t^{\beta},p}}=\Theta_{t^{\beta},p;\delta}(\mathbf{t}(x))c_{\delta\gamma}^{\alpha} \frac{\partial t^{\gamma}(x)}{\partial x},
\end{equation}
where $\Theta_{t^{\beta},p;\delta}$ denotes the $\frac{\partial}{\partial t^{\delta}}$-component of $\Theta_{t^{\beta},p}$.

%that is, the function $\Omega(\mathbf{u},s)$ and $\hat{\Omega}(\mathbf{u},s)$ defined by (1.1)-(1.4) provide a flat F-manifold structure on $\mathcal{M}\times \mathcal{U}$ respectively, via formula (1.0).  Then we construct the explicit form of the principal hierarchy for these flat F-manifold. 
\subsection{Proof of Theorem \ref{mainthm1}}
The proof of Theorem \ref{mainthm1} relies on a key result from \cite{alcolado2017extended}, which provides a method to construct solutions to the open WDVV equations.

\begin{lem}\label{mainlem}
	Let $N$ be a Frobenius manifold with flat coordinates $\mathbf{t}=\{t^{1}, \dots, t^{n}\}$, and let $f(\mathbf{t},s)$ be a smooth function on $N\times U$ such that $\partial_{s} f\ne 0$. If $f$ satisfies the equation
	\begin{equation}\label{omegaequ}
		\partial_{\alpha}\partial_{\beta}f = \partial_{s}\left(\frac{\partial_{\alpha}f \partial_{\beta}f - c_{\alpha\beta}^{\delta}\partial_{\delta}f}{\partial_{s}f}\right), \quad \alpha,\beta=1,\dots,n,
	\end{equation}
	where $\partial_{\alpha}=\frac{\partial}{\partial t^{\alpha}}$, then the function $F^{o}=F^{o}(\mathbf{t},s)$ determined by the system
	\begin{equation}\label{Fosys}
		\partial_{\alpha}\partial_{\beta} F^{o} = \frac{\partial_{\alpha}f \partial_{\beta}f - c_{\alpha\beta}^{\delta}\partial_{\delta}f}{\partial_{s}f}, \quad \partial_{s}F^o=f,
	\end{equation}
	provides a solution to the open WDVV equations for $N$.
\end{lem}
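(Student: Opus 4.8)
The plan is to read \eqref{Fosys} as a prescription for the entire Hessian of $F^{o}$ in the $n+1$ variables $(t^{1},\dots,t^{n},s)$. Writing $f_{\alpha}:=\partial_{\alpha}f$ and $f_{s}:=\partial_{s}f$, the relation $\partial_{s}F^{o}=f$ already fixes the mixed and pure $s$-entries, $\partial_{\alpha}\partial_{s}F^{o}=f_{\alpha}$ and $\partial_{s}\partial_{s}F^{o}=f_{s}$, while the first relation of \eqref{Fosys} fixes $\partial_{\alpha}\partial_{\beta}F^{o}=(f_{\alpha}f_{\beta}-c_{\alpha\beta}^{\delta}f_{\delta})/f_{s}$. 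With all second derivatives in hand, I would verify the two open WDVV equations by direct substitution, and then separately confirm that this overdetermined prescription for the Hessian is integrable, so that $F^{o}$ genuinely exists; this last point is exactly where the hypothesis \eqref{omegaequ} is used.

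The second open WDVV equation is purely algebraic: substituting the entries above, its left-hand side $c_{\alpha\beta}^{\delta}f_{\delta}+\big((f_{\alpha}f_{\beta}-c_{\alpha\beta}^{\delta}f_{\delta})/f_{s}\big)f_{s}$ collapses to $f_{\alpha}f_{\beta}$, which is its right-hand side, so this equation holds by construction. For the first open WDVV equation I would again substitute and clear the common denominator $f_{s}$. The term $c_{\alpha\beta}^{\delta}\,\partial_{\delta}\partial_{\gamma}F^{o}$ produces $+c_{\alpha\beta}^{\delta}f_{\delta}f_{\gamma}/f_{s}$, which cancels the $-c_{\alpha\beta}^{\delta}f_{\delta}f_{\gamma}/f_{s}$ arising from $\partial_{\alpha}\partial_{\beta}F^{o}\cdot\partial_{\gamma}\partial_{s}F^{o}$, and the totally symmetric cubic term $f_{\alpha}f_{\beta}f_{\gamma}/f_{s}$ appears identically on both sides. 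What remains is $c_{\alpha\beta}^{\delta}c_{\delta\gamma}^{\epsilon}f_{\epsilon}=c_{\beta\gamma}^{\delta}c_{\delta\alpha}^{\epsilon}f_{\epsilon}$, i.e.\ the contraction with $f_{\epsilon}$ of the associativity relation $c_{\alpha\beta}^{\delta}c_{\delta\gamma}^{\epsilon}=c_{\beta\gamma}^{\delta}c_{\delta\alpha}^{\epsilon}$ of the Frobenius algebra of $N$; hence the first equation follows from associativity.

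It remains to establish that the prescribed Hessian is that of an actual function, namely that all third partials are symmetric. The conditions carrying at least two $s$-slots are automatic from $\partial_{s}F^{o}=f$, and the condition $\partial_{s}(\partial_{\alpha}\partial_{\beta}F^{o})=\partial_{\alpha}\partial_{\beta}(\partial_{s}F^{o})$ reads $\partial_{s}\big((f_{\alpha}f_{\beta}-c_{\alpha\beta}^{\delta}f_{\delta})/f_{s}\big)=\partial_{\alpha}\partial_{\beta}f$, which is precisely \eqref{omegaequ}. The only nontrivial remaining condition is the purely $\mathbf{t}$-symmetry $\partial_{\gamma}(\partial_{\alpha}\partial_{\beta}F^{o})=\partial_{\alpha}(\partial_{\gamma}\partial_{\beta}F^{o})$; differentiating its defect in $s$ and using \eqref{omegaequ} shows both sides have $s$-derivative $\partial_{\alpha}\partial_{\beta}\partial_{\gamma}f$, so the defect is independent of $s$, after which one eliminates the second $\mathbf{t}$-derivatives $f_{\alpha\beta}$ via \eqref{omegaequ} and cancels the rest using the total symmetry of $c_{\alpha\beta\gamma}=\partial_{\alpha}\partial_{\beta}\partial_{\gamma}F$ and of $\partial_{\gamma}c_{\alpha\beta\epsilon}$ in the flat coordinates of $N$. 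I expect this integrability verification to be the main obstacle, whereas the two open WDVV identities reduce, as above, to a one-line substitution together with the associativity of $N$.
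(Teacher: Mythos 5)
Your proof is correct, but note that there is no proof in the paper to compare it with: the paper states Lemma \ref{mainlem} as a known result imported from Alcolado's thesis \cite{alcolado2017extended} and never proves it, proceeding directly to its application. Your attempt is thus a self-contained verification of what the paper treats as a black box. Your two substitution checks are exactly right: the second open WDVV equation holds identically by construction, and the first reduces to the contraction with $f_{\epsilon}$ of the associativity relation of $N$. Your identification of where \eqref{omegaequ} enters — as the compatibility $\partial_{s}(\partial_{\alpha}\partial_{\beta}F^{o})=\partial_{\alpha}\partial_{\beta}(\partial_{s}F^{o})$ of the overdetermined system \eqref{Fosys} — is also correct. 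The one place where your sketch understates what is needed is the purely $\mathbf{t}$-wise integrability $\partial_{\gamma}(\partial_{\alpha}\partial_{\beta}F^{o})=\partial_{\alpha}(\partial_{\gamma}\partial_{\beta}F^{o})$: after eliminating all second derivatives $f_{\mu\nu}$ via \eqref{omegaequ}, the defect does \emph{not} cancel using only the total symmetry of $c_{\alpha\beta\gamma}$ and $\partial_{\gamma}c_{\alpha\beta\epsilon}$; it still contains terms proportional to
\[
\bigl(c_{\alpha\beta}^{\delta}c_{\delta\gamma}^{\epsilon}-c_{\gamma\beta}^{\delta}c_{\delta\alpha}^{\epsilon}\bigr)f_{\epsilon s}
\qquad\text{and}\qquad
\bigl(c_{\alpha\beta}^{\delta}c_{\delta\gamma}^{\epsilon}-c_{\gamma\beta}^{\delta}c_{\delta\alpha}^{\epsilon}\bigr)f_{\epsilon},
\]
which vanish only by associativity (together with commutativity) of the product on $N$ — the same ingredient you invoked for the first open WDVV equation. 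With that added, the elimination does close: all remaining terms cancel pairwise, the defect is identically zero, and your preliminary observation that it is $s$-independent becomes unnecessary. Modulo writing out this one computation, your proof is complete, and it has the advantage over the paper of making the argument explicit rather than deferring to a citation.
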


Our proof strategy is to apply Lemma \ref{mainlem} to the infinite-dimensional Frobenius manifold $\mathcal{M}$. We will show that the functions $f=a(s)$ and $f=\hat{a}(s)$ satisfy the required condition \eqref{omegaequ}. To this end, we first need an explicit expression for the right-hand side of \eqref{omegaequ} in our specific setting, which is derived in the following lemma.

\begin{lem}\label{lem:identities}
	For any tangent vectors $\partial_{1}, \partial_{2} \in T_{\vec{a}}\mathcal{M}$, the following identities hold:
	\begin{align}
		\begin{split} \label{mullem1}
			\frac{\partial_{1}a(z)\partial_{2}a(z)-(\partial_{1}\circ\partial_{2})a(z)}{a'(z)} 
			&= \biggl(\frac{\partial_1 \zeta(z) \partial_2 \zeta(z)}{\zeta'(z)}\biggr)_{\!-} 
			+ \biggl(\frac{\partial_1 \ell(z) \partial_2 \ell(z)}{\ell'(z)}\biggr)_{\!\infty,\,\ge 0} \\
			&\quad + \sum_{j=1}^{m}\biggl(\frac{\partial_1 \ell(z) \partial_2 \ell(z)}{\ell'(z)}\biggr)_{\!\varphi_{j},\,\le -1},
		\end{split} \\[15pt]
		\begin{split} \label{mullem2}
			\frac{\partial_{1}\hat{a}(z)\partial_{2}\hat{a}(z)-(\partial_{1}\circ\partial_{2})\hat{a}(z)}{\hat{a}'(z)} 
			&= -\biggl(\frac{\partial_1 \zeta(z) \partial_2 \zeta(z)}{\zeta'(z)}\biggr)_{\!+} 
			+ \biggl(\frac{\partial_1 \ell(z) \partial_2 \ell(z)}{\ell'(z)}\biggr)_{\!\infty,\,\ge 0} \\
			&\quad + \sum_{j=1}^{m}\biggl(\frac{\partial_1 \ell(z) \partial_2 \ell(z)}{\ell'(z)}\biggr)_{\!\varphi_{j},\,\le -1}.
		\end{split}
	\end{align}
\end{lem}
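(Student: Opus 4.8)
The plan is to reduce everything to the operator description of the product and then to a pure projection computation in the single variable $z$. It suffices to establish \eqref{mullem1}; the companion identity \eqref{mullem2} follows from the completely parallel computation on the \emph{second} components of \eqref{mulope} and \eqref{etaom} (equivalently, from the $a\leftrightarrow\hat a$, $(\cdot)_-\leftrightarrow(\cdot)_+$, $\infty\leftrightarrow\varphi_j$ symmetry of the construction). So I focus on the first identity.

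First I would rewrite the product through the linear operator $C$. By \eqref{muloped} we have $\partial_1\circ\partial_2=C_{\partial_1}(\vec\omega_2)$ for any covector $\vec\omega_2=(\omega_2,\hat\omega_2)$ with $\eta(\vec\omega_2)=\partial_2$; since the product is independent of the chosen preimage (as noted after \eqref{mul}), I am free to work with a convenient representative. Reading off the first component of \eqref{mulope} gives $(\partial_1\circ\partial_2)a = a'(\omega_2\partial_1 a+\hat\omega_2\partial_1\hat a)_- - \partial_1 a\,(\omega_2 a'+\hat\omega_2\hat a')_-$, while the first component of $\eta(\vec\omega_2)=\partial_2$ (via \eqref{etaom}) gives $\partial_2 a = a'(\omega_2+\hat\omega_2)_- - (\omega_2 a'+\hat\omega_2\hat a')_-$. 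Multiplying the latter by $\partial_1 a$ and subtracting, the common term $\partial_1 a\,(\omega_2 a'+\hat\omega_2\hat a')_-$ cancels, and after dividing by $a'$ I obtain the clean reduction
\[
\frac{\partial_1 a\,\partial_2 a-(\partial_1\circ\partial_2)a}{a'} = \partial_1 a\,(\omega_2+\hat\omega_2)_- - (\omega_2\,\partial_1 a+\hat\omega_2\,\partial_1\hat a)_-,
\]
so that the whole statement becomes a projection identity with no product left to interpret.

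The remaining input is an explicit representative covector $\vec\omega_2$ for $\partial_2$. Guided by the two pieces of the metric \eqref{whimetric}, I would assemble $\vec\omega_2$ out of $\partial_2\zeta/\zeta'$ (the contour/$\zeta$ sector) and $\partial_2\ell/\ell'$ (the residue/$\ell$ sector), and verify $\eta(\vec\omega_2)=\partial_2$ directly from \eqref{etaom} using the inverse relations $a=\ell+\zeta_-$, $\hat a=\ell-\zeta_+$ together with $a'-\ell'=(\zeta')_-$ and $\hat a'-\ell'=-(\zeta')_+$. Substituting this $\vec\omega_2$ into the reduction above, expanding $\partial_1 a=\partial_1\ell+(\partial_1\zeta)_-$ and $\partial_1\hat a=\partial_1\ell-(\partial_1\zeta)_+$, and sorting all terms by pole type, I expect the $\zeta$-sector contributions to collapse to $(\partial_1\zeta\,\partial_2\zeta/\zeta')_-$ and the $\ell$-sector contributions to collapse to $(\partial_1\ell\,\partial_2\ell/\ell')_{\infty,\ge0}$ together with $\sum_j(\partial_1\ell\,\partial_2\ell/\ell')_{\varphi_j,\le-1}$, matching the right-hand side of \eqref{mullem1}.

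The main obstacle is exactly this final bookkeeping of projections. Two points need care. First, one must show that the parts of $\partial_1\zeta\,\partial_2\zeta/\zeta'$ and $\partial_1\ell\,\partial_2\ell/\ell'$ supported at the critical points of $\zeta$ and $\ell$ (where the denominators vanish) cancel out completely — these are precisely the parts absorbed by $(\partial_1\circ\partial_2)a$ — leaving only the contributions attached to the poles of $\zeta$ and $\ell$. Second, the projections $(\cdot)_\pm$ interact nontrivially with $(\cdot)_{\infty,\ge0}$, $(\cdot)_{\varphi_j,\le-1}$, and the characteristic functions $\mathbf1_{\gamma_i}$; in particular, the subtlety noted after \eqref{1fun}, that $(f\mathbf1_{\gamma_j})_\pm$ need not vanish on $\gamma_{j'}$ for $j'\ne j$, means the cross terms between distinct contours and between the $\zeta$- and $\ell$-sectors must be tracked and shown to cancel. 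Carrying out these cancellations structurally, rather than through a brute-force Laurent expansion, is where the real work lies.
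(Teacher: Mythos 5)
Your opening reduction is exactly the paper's: using \eqref{muloped} and \eqref{mulope} with a preimage $\vec\omega_2=(\omega_2,\hat\omega_2)$ of $\partial_2$ under $\eta$, both you and the paper cancel the common term and arrive at
\[
\frac{\partial_1 a\,\partial_2 a-(\partial_1\circ\partial_2)a}{a'}
=\partial_1 a\,(\omega_2+\hat\omega_2)_- -\bigl(\omega_2\,\partial_1 a+\hat\omega_2\,\partial_1\hat a\bigr)_-,
\]
and this part is correct. The gap is everything after that. What you offer for the second half is a plan, not a proof: construct an explicit preimage $\vec\omega_2$ out of $\partial_2\zeta/\zeta'$ and $\partial_2\ell/\ell'$, substitute, and sort by pole type. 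You yourself flag the two essential difficulties (cancellation of the critical-point contributions, and the cross terms arising because $(f\mathbf{1}_{\gamma_j})_\pm$ need not vanish on the other contours) and leave them unresolved, saying you ``expect'' the collapse and that this bookkeeping ``is where the real work lies.'' That unperformed bookkeeping \emph{is} the lemma; nothing beyond the reduction has been established. Your route also front-loads an extra nontrivial task the paper never needs: writing down an explicit $\vec\omega_2$ and verifying $\eta(\vec\omega_2)=\partial_2$ from \eqref{etaom} is itself a projection computation of comparable difficulty, since the $\ell$-sector of such a preimage must be assembled contour by contour (converting the residues at $\infty$ and $\varphi_j$ into contour integrals), which is precisely where the $\mathbf{1}_{\gamma_i}$ subtlety you worry about enters.

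The paper closes the argument differently and more economically: it keeps $\vec\omega_2$ \emph{arbitrary} and invokes three identities from \cite{ma2024infinite}, valid for any preimage, namely \eqref{etalem1}--\eqref{etalem3}. Writing $\xi_1=\partial_1 a$, $\hat\xi_1=\partial_1\hat a$, these convert each term on the right-hand side of \eqref{mullem1} into the same variables as the reduced left-hand side:
\begin{align*}
\frac{\partial_1\zeta\,\partial_2\zeta}{\zeta'} &= -(\xi_1-\hat\xi_1)\,(\omega_{2+}-\hat\omega_{2-}), \\
\biggl(\frac{\partial_1\ell\,\partial_2\ell}{\ell'}\biggr)_{\!\infty,\,\ge 0} &= \bigl((\omega_2+\hat\omega_2)_-\,\xi_1\bigr)_+, \\
\sum_{j=1}^{m}\biggl(\frac{\partial_1\ell\,\partial_2\ell}{\ell'}\biggr)_{\!\varphi_j,\,\le -1} &= -\bigl((\omega_2+\hat\omega_2)_+\,\hat\xi_1\bigr)_-,
\end{align*}
after which the identity follows by elementary $(\cdot)_\pm$ algebra (repeatedly using $f=f_++f_-$), with no critical-point analysis and no contour-by-contour cancellation at all. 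To repair your proposal, either carry out in full the two cancellations you flagged (including the verification of your explicit preimage), or replace the explicit-preimage step by these preimage-independent identities — that substitution is exactly the device that makes the paper's proof short.
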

 \begin{proof}
We prove only \eqref{mullem1}; the proof of \eqref{mullem2} is entirely analogous.

Denote $\vec{\xi}_{\nu}=(\xi_{\nu}(z),\hat{\xi}_{\nu}(z))=\partial_{\nu}\vec{a}$ for $\nu=1,2$. Since the map $\eta$ is surjective, we can write $\vec{\xi}_{\nu}=\eta(\vec{\omega}_{\nu})$ for some $\vec{\omega}_{\nu}=(\omega_{\nu}(z),\hat{\omega}_{\nu}(z))\in\mathcal{H}\times\mathcal{H}$. From the definition of the tangent space multiplication \eqref{mul} and the operator $C_{\vec{\xi}}$ \eqref{muloped}, we have
\[
(\partial_{1}\circ\partial_{2})\vec{a} = \vec{\xi}_{1}\circ\vec{\xi}_{2} = C_{\vec{\xi}_{1}}\vec{\omega}_{2}.
\]
By substituting the explicit formulas for $\xi_{2}(z)$ (as the first component of $\eta(\vec{\omega}_2)$) and the first component of $C_{\vec{\xi}_{1}}\vec{\omega}_{2}$, a direct calculation shows that
\begin{align*}
	\partial_{1}a(z)\partial_{2}a(z) &= \xi_{1}(z)\Bigl(a'(z)(\omega_{2}(z)+\hat{\omega}_{2}(z))_{-} - (\omega_{2}(z) a'(z)+\hat{\omega}_{2}(z)\hat{a}'(z))_{-}\Bigr), \\
	(\partial_{1}\circ\partial_{2})a(z) &= a'(z)(\omega_{2}(z)\xi_{1}(z)+\hat{\omega}_{2}(z)\hat{\xi}_{1}(z))_{-} - \xi_{1}(z)(\omega_{2}(z)a'(z)+\hat{\omega}_{2}(z)\hat{a}'(z))_{-}.
\end{align*}
Thus, dividing the difference by $a'(z)$, the left-hand side of \eqref{mullem1} simplifies to:
\begin{equation*}
	\frac{\partial_{1}a(z)\partial_{2}a(z)-(\partial_{1}\circ\partial_{2})a(z)}{a'(z)} = \xi_{1}(z)(\omega_{2}(z)+\hat{\omega}_{2}(z))_{-} - (\xi_{1}(z)\omega_{2}(z)+\hat{\xi}_{1}(z)\hat{\omega}_{2}(z))_{-}.
\end{equation*}
% 接下来的证明步骤...
 	We now express the terms on the right-hand side of \eqref{mullem1} in terms of $\vec{\xi}_{1}$ and $\vec{\omega}_{2}$. This is accomplished using the following identities from \cite{ma2024infinite}:
 	\begin{align}
 		\xi_{\nu}(z) - \hat{\xi}_{\nu}(z) &= -\zeta'(z)(\omega_{\nu}(z)_{+} - \hat{\omega}_{\nu}(z)_{-}), \label{etalem1} \\[1pt]
 		\biggl(\frac{\xi_{\nu}(z)}{a'(z)}\biggr)_{\!\infty,\,\ge -n_{0}+1} &= \biggl( (\omega_{\nu}(z) + \hat{\omega}_{\nu}(z))_{-} \biggr)_{\!\infty,\,\ge -n_{0}+1}, \label{etalem2} \\[1pt]
 		\biggl(\frac{\hat{\xi}_{\nu}(z)}{\hat{a}'(z)}\biggr)_{\!\varphi_{j},\,\le n_{j}} &= -\biggl( (\omega_{\nu}(z) + \hat{\omega}_{\nu}(z))_{+} \biggr)_{\!\varphi_{j},\,\le n_{j}}, \quad 1\le j\le m. \label{etalem3}
 	\end{align}
 	Applying these identities, we obtain the terms on the right-hand side of \eqref{mullem1} as:
 	\begin{align*}
 		\frac{\partial_1 \zeta(z) \partial_2 \zeta(z)}{\zeta'(z)} 
 		&= -(\xi_{1}(z)-\hat{\xi}_{1}(z))(\omega_{2}(z)_{+}-\hat{\omega}_{2}(z)_{-}), \\[1pt]	
 		\biggl(\frac{\partial_1 \ell(z) \partial_2 \ell(z)}{\ell'(z)}\biggr)_{\!\infty,\,\ge 0} 
 		&= \biggl(\frac{\xi_{1}(z) \xi_{2}(z)}{a'(z)}\biggr)_{\!\infty,\,\ge 0} 
 		= \bigl( (\omega_{2}(z)+\hat{\omega}_{2}(z))_{-}\xi_{1}(z) \bigr)_{+}, \\[1pt]
 		\sum_{j=1}^{m}\biggl(\frac{\partial_1 \ell(z) \partial_2 \ell(z)}{\ell'(z)}\biggr)_{\!\varphi_{j},\,\le -1} 
 		&= -\bigl( (\omega_{2}(z)+\hat{\omega}_{2}(z))_{+}\hat{\xi}_{1}(z) \bigr)_{-}.
 	\end{align*}
 	Summing these three terms recovers the expression for the left-hand side. This completes the proof of \eqref{mullem1}.
 \end{proof}
 
We now complete the proof of Theorem \ref{mainthm1}. Since the coordinates $\mathbf{u}$ are flat, the covariant derivatives of the basis vector fields vanish, i.e., $\nabla_{\partial_{\alpha}}\partial_{\beta}=0$ for any $\alpha,\beta\in \mathbf{u}$. Substituting this into the definition of the Levi-Civita connection \eqref{conn}, we find that
\begin{align*}
	\partial_{\alpha}\partial_{\beta}a(z) &= \biggl(\frac{\partial_\alpha \zeta(z) \partial_\beta \zeta(z)}{\zeta'(z)}\biggr)_{\!-}' + \biggl(\frac{\partial_\alpha \ell(z) \partial_\beta \ell(z)}{\ell'(z)}\biggr)_{\!\infty,\,\ge 0}' + \sum_{j=1}^{m}\biggl(\frac{\partial_\alpha \ell(z) \partial_\beta \ell(z)}{\ell'(z)}\biggr)_{\!\varphi_{j},\,\le -1}' \\
	\intertext{and}
	\partial_{\alpha}\partial_{\beta}\hat{a}(z) &= -\biggl(\frac{\partial_\alpha \zeta(z) \partial_\beta \zeta(z)}{\zeta'(z)}\biggr)_{\!+}' + \biggl(\frac{\partial_\alpha \ell(z) \partial_\beta \ell(z)}{\ell'(z)}\biggr)_{\!\infty,\,\ge 0}' + \sum_{j=1}^{m}\biggl(\frac{\partial_\alpha \ell(z) \partial_\beta \ell(z)}{\ell'(z)}\biggr)_{\!\varphi_{j},\,\le -1}'.
\end{align*}
A comparison of the above expressions for $\partial_\alpha \partial_\beta a(z)$ and $\partial_\alpha \partial_\beta \hat{a}(z)$ with the identities in Lemma \ref{lem:identities} shows that the condition \eqref{omegaequ} of Lemma \ref{mainlem} is satisfied for $f=a(s)$ and $f=\hat{a}(s)$, respectively. The existence of the solutions $\Omega$ and $\hat{\Omega}$ as defined in Theorem \ref{mainthm1} is thus established. \hfill \qed

\subsection{Proof of Theorem \ref{mainthm2}}
We now construct the principal hierarchy for the flat F-manifold structures associated with the solutions $\Omega$ and $\hat{\Omega}$. By definition, this requires us to determine a set of vector fields
\begin{equation*}
	\Theta_{u,p} = \sum_{\alpha\in\mathbf{u}} \Theta_{u,p;\alpha} \partial_{\alpha} + \Theta_{u,p;s} \partial_{s}, \quad u\in \mathbf{u}\cup\{s\}, \quad p\ge 0,
\end{equation*}
on the extended manifold $\mathcal{M}\times U$. These vector fields are defined recursively by the relation
\begin{equation}\label{openre1}
	\nabla_{X}^{\mathrm{ext}}\Theta_{u,p+1} = X \star \Theta_{u,p}
\end{equation}
for any vector field $X=\sum_{u\in \mathbf{u}\cup\{s\}}X_{u}\partial_{u}$ on $\mathcal{M}\times U$, together with the initial condition
\begin{equation}\label{openre2}
	\Theta_{u,0} = \partial_{u}, \quad u\in \mathbf{u}\cup\{s\}.
\end{equation}
Here, $\nabla^{\mathrm{ext}}$ is the affine connection on $\mathcal{M}\times U$ with respect to the flat coordinates $\mathbf{u}\cup\{s\}$. The operation $\star$ is the multiplication of the flat F-manifold structure, with its components given by:
\begin{align*}
	\partial_{\alpha} \star \partial_{\beta} &= \partial_{\alpha} \circ \partial_{\beta} + \partial_{\alpha}\partial_{\beta}F^{o}(\mathbf{u},s)\partial_{s}, \quad \alpha,\beta\in \mathbf{u}; \\[1pt]
	\partial_{\alpha} \star \partial_{s} &= \partial_{\alpha}f(s)\partial_{s}; \\[1pt]
	\partial_{s} \star \partial_{s} &= f'(s)\partial_{s}.
\end{align*}
This general framework applies to our two cases, corresponding to $(F^o, f) = (\Omega, a(s))$ and $(F^o, f) = (\hat{\Omega}, \hat{a}(s))$, respectively.

For the case of $f=a(s)$ and $F^{o}=\Omega$, we propose the following ansatz for the vector fields $\Theta_{u,p}$:
\begin{align}
	\sum_{\alpha\in\mathbf{u}}\Theta_{u,p;\alpha}\partial_{\alpha}\vec{a}(z) &= \eta\biggl( \frac{\partial Q_{u,p}(a(z),\hat{a}(z))}{\partial a}, \frac{\partial Q_{u,p}(a(z),\hat{a}(z))}{\partial \hat{a}} \biggr), \label{privec1} \\[5pt]
	\Theta_{u,p;s} &= -(Q_{u,p-1}(a(s),\hat{a}(s)))_{-} + \widetilde{Q}_{u,p-1}(a(s)) \label{privec2}
\end{align}
for indices $u\in\mathbf{u}$, and
\begin{equation}\label{prisvec}
	\Theta_{s,p;\alpha}=0, \quad \alpha\in\mathbf{u}; \qquad \Theta_{s,p;s}=\frac{a(s)^{p}}{p!},
\end{equation}
where $Q_{u,p}$ and $\widetilde{Q}_{u,p}$ are required to satisfy
\begin{equation}\label{Qrec}
	\frac{\partial Q_{u,p}}{\partial a} + \frac{\partial Q_{u,p}}{\partial \hat{a}} = Q_{u,p-1}, \quad \frac{d \widetilde{Q}_{u,p}}{d a} = \widetilde{Q}_{u,p-1}, \quad p\ge 0.
\end{equation}
 
We now verify that this ansatz satisfies the recursive relation \eqref{openre1}. Expanding the left-hand side of \eqref{openre1}, we obtain:\footnote{We remark that although the sums in the following expansions range over the infinite set of indices $\mathbf{u}$, they represent well-defined vector fields on $\mathcal{M}$. This is guaranteed by the fact that the connection $\nabla$ and the multiplication $\circ$ are intrinsically defined via the explicit formulas given in Section 2 (see \eqref{conn} and \eqref{mul}, respectively).}
\begin{align*}
	\nabla_{X}^{\mathrm{ext}}\Theta_{u,p+1} &= \sum_{\alpha\in\mathbf{u}}\nabla_{X_{\alpha}\partial_{\alpha}}\Theta_{u,p+1} + X_{s}\nabla_{\partial_{s}}\Theta_{u,p+1} \\
	&= \sum_{\alpha\in\mathbf{u}}\nabla_{X_{\alpha}\partial_{\alpha}}\biggl(\sum_{\beta\in\mathbf{u}}\Theta_{u,p+1;\beta}\partial_{\beta}\biggr) + \sum_{\alpha\in\mathbf{u}}X_{\alpha}(\partial_{\alpha}\Theta_{u,p+1;s})\partial_{s} \\
	&\quad + \sum_{\alpha\in\mathbf{u}}X_{s}(\partial_{s}\Theta_{u,p+1;\alpha})\partial_{\alpha} + X_{s}(\partial_{s}\Theta_{u,p+1;s})\partial_{s}.
\end{align*}

Similarly, expanding the $\star$-product on the right-hand side of \eqref{openre1} yields:
\begin{align*}
	X \star \Theta_{u,p} &= \sum_{\alpha,\beta\in \mathbf{u}}X_{\alpha}\Theta_{u,p;\beta}\partial_{\alpha}\circ\partial_{\beta} + \sum_{\alpha,\beta\in \mathbf{u}}X_{\alpha}\Theta_{u,p;\beta}(\partial_{\alpha}\partial_{\beta}\Omega)\partial_{s} \\
	&\quad + \sum_{\alpha\in\mathbf{u}}X_{\alpha}\Theta_{u,p;s}(\partial_{\alpha}\partial_{s}\Omega)\partial_{s} + \sum_{\alpha\in\mathbf{u}}X_{s}\Theta_{u,p;\alpha}(\partial_{\alpha}\partial_{s}\Omega)\partial_{s} \\
	&\quad + X_{s}\Theta_{u,p;s}(\partial_{s}^{2}\Omega)\partial_{s}.
\end{align*}
The following identity has been established in the construction of the principal hierarchy for $\mathcal{M}$ in \cite{ma2024infinite}:
\[
\sum_{\alpha\in\mathbf{u}}\nabla_{X_{\alpha}\partial_{\alpha}}\biggl(\sum_{\beta\in\mathbf{u}}\Theta_{u,p+1;\beta}\partial_{\beta}\biggr) = \sum_{\alpha,\beta\in \mathbf{u}}X_{\alpha}\Theta_{u,p;\beta}\partial_{\alpha}\circ\partial_{\beta}.
\]
Furthermore, since the components $\Theta_{u,p;\alpha}$ are independent of $s$, we have
\[
\partial_{s}\Theta_{u,p;\alpha}=0, \quad \alpha\in\mathbf{u}.
\]
The verification of \eqref{openre1} thus reduces to proving the equality of the components along the vector field $\partial_s$, as follows:
\begin{align*}
	\sum_{\alpha\in\mathbf{u}}X_{\alpha}\partial_{\alpha}\Theta_{u,p+1;s} + X_{s}\partial_{s}\Theta_{u,p+1;s} &= \sum_{\alpha,\beta\in \mathbf{u}}X_{\alpha}\Theta_{u,p;\beta}\partial_{\alpha}\partial_{\beta}\Omega + \sum_{\alpha\in\mathbf{u}}X_{\alpha}\Theta_{u,p;s}(\partial_{\alpha}\partial_{s}\Omega) \\
	&\quad + \sum_{\alpha\in\mathbf{u}}X_{s}\Theta_{u,p;\alpha}(\partial_{\alpha}\partial_{s}\Omega) + X_{s}\Theta_{u,p;s}\partial_{s}^{2}\Omega.
\end{align*}
Using the explicit form of $\Theta_{u,p+1;s}$ from our ansatz, the left-hand side becomes:
\begin{align*}
	& -\biggl(\sum_{\alpha\in\mathbf{u}}X_{\alpha}\partial_{\alpha}Q_{u,p}(\vec{a}(s))\biggr)_{\!-} - X_{s}(\partial_{s}Q_{u,p}(\vec{a}(s)))_{-} \\
	&\qquad + \sum_{\alpha\in\mathbf{u}}X_{\alpha}\partial_{\alpha}\widetilde{Q}_{u,p}(a(s)) + X_{s}\partial_{s}\widetilde{Q}_{u,p}(a(s)).
\end{align*}
We analyze the three groups of terms on the right-hand side separately. The first term, involving $\partial_{\alpha}\partial_{\beta}\Omega$, can be expressed using Theorem \ref{mainthm1} and Lemma \ref{lem:identities} as:
\begin{align*}
	&\frac{\sum_{\alpha,\beta\in\mathbf{u}}X_{\alpha}\Theta_{u,p;\beta}\partial_{\alpha}a(s)\partial_{\beta}a(s) - \bigl(\partial_{X}\circ\partial_{\sum_{\alpha\in\mathbf{u}}\Theta_{u,p;\alpha}\partial_{\alpha}}\bigr)a(s)}{a'(s)} \\
	&\quad = -\frac{(\sum_{\alpha\in\mathbf{u}}X_{\alpha}\partial_{\alpha}a(s))(Q_{u,p}(\vec{a}(s)))_{\!-}'}{a'(s)} + \biggl(\sum_{\alpha\in\mathbf{u}}X_{\alpha}\partial_{\alpha}a(s)\biggr)(Q_{u,p-1}(\vec{a}(s)))_{-} \\
	&\qquad + \frac{(\sum_{\alpha\in\mathbf{u}}X_{\alpha}\partial_{\alpha}a(s))(Q_{u,p}(\vec{a}(s)))_{\!-}'}{a'(s)} - \biggl(\sum_{\alpha\in\mathbf{u}}X_{\alpha}\partial_{\alpha}Q_{u,p}(\vec{a}(s))\biggr)_{\!-} \\
	&\quad = \biggl(\sum_{\alpha\in\mathbf{u}}X_{\alpha}\partial_{\alpha}a(s)\biggr)(Q_{u,p-1}(\vec{a}(s)))_{-} - \biggl(\sum_{\alpha\in\mathbf{u}}X_{\alpha}\partial_{\alpha}Q_{u,p}(\vec{a}(s))\biggr)_{\!-}.
\end{align*}
The second term, involving $\partial_\alpha \partial_s \Omega$, evaluates to:
\begin{equation}\label{mulright1}
	\sum_{\alpha\in\mathbf{u}}X_{\alpha}\Theta_{u,p;s}(\partial_{\alpha}\partial_{s}\Omega) = -\biggl(\sum_{\alpha\in\mathbf{u}}X_{\alpha}\partial_{\alpha}a(s)\biggr)(Q_{u,p-1}(\vec{a}(s)))_{-} + \biggl(\sum_{\alpha\in\mathbf{u}}X_{\alpha}\partial_{\alpha}a(s)\biggr)\widetilde{Q}_{u,p-1}(a(s)).
\end{equation}
The last two terms simplify to:
\begin{align}
	&\sum_{\alpha\in\mathbf{u}}X_{s}\Theta_{u,p;\alpha}(\partial_{\alpha}\partial_{s}\Omega) + X_{s}\Theta_{u,p;s}\partial_{s}^{2}\Omega \nonumber \\
	&\quad = X_{s}\Bigl(-(Q_{u,p}(\vec{a}(s)))_{\!-}' + a'(s)(Q_{u,p-1}(\vec{a}(s)))_{-}\Bigr) \nonumber \\
	&\qquad - X_{s}a'(s)(Q_{u,p-1}(\vec{a}(s)))_{-} + X_{s}a'(s)\widetilde{Q}_{u,p-1}(a(s)) \nonumber \\
	&\quad = -X_{s}(Q_{u,p}(\vec{a}(s)))_{\!-}' + X_{s}a'(s)\widetilde{Q}_{u,p-1}(a(s)). \label{mulright2}
\end{align}
Summing these three results and using the equality \eqref{Qrec}, we find that the right-hand side is identical to the expression for the left-hand side. This confirms the identity \eqref{openre1}.

By choosing the explicit forms for $Q_{u,p}$ and $\widetilde{Q}_{u,p}$ as follows:\footnote{Strictly speaking, $Q_{u,p}$ involves fractional powers and logarithms, requiring restrictions on winding numbers. However, following the strategy in \cite{ma2024infinite}, one can derive explicit expressions for $\Theta_{u,p}$ that are independent of these branch cuts. Thus, the constructed vector fields are globally well-defined.}
\begin{align*}
	Q_{t_{i,l},p} &= \frac{1}{(p+1)!} \frac{d_{i}}{l+d_{i}} \zeta(z)^{l/d_{i}} \bigl(a(z)^{p+1} -\hat{a}(z)^{p+1}\bigr)\mathbf{1}_{\gamma_{i}}, \quad l\ne -d_{i}; \\[10pt]
	Q_{h_{0,j},p} &= \frac{\Gamma(1-j/n_{0})}{\Gamma(2+p-j/n_{0})} a(z)^{1+p-j/n_{0}}; \\[10pt]
	Q_{h_{k,r},p} &= \frac{\Gamma(1-r/n_{k})}{\Gamma(2+p-r/n_{k})} \hat{a}(z)^{1+p-r/n_{k}}\mathbf{1}_{\gamma_{k}}, \quad r\ne n_{k}; \\[10pt]
	Q_{t_{i,-d_{i}},p} &= -d_{i}\frac{a(z)^{p}}{p!} \biggl(\log \frac{a(z)^{1/n_{0}}}{ \zeta(z)^{1/d_{i}} } -\frac{c_{p}}{n_{0}}\biggr)\mathbf{1}_{\gamma_{i}} \nonumber \\
	&\quad - d_{i}\sum_{j\ne i}\frac{a(z)^{p}}{p!} \biggl(\log a(z)^{1/n_{0}}-\frac{c_{p}}{n_{0}}\biggr)\mathbf{1}_{\gamma_{j}}; \\[10pt]
	Q_{h_{k,n_{k}},p} &= n_{k} \frac{\hat{a}(z)^{p}}{p!} \biggl(\log\bigl(\zeta(z)^{1/d_{k}}\hat{a}(z)^{1/n_{k}}\bigr)-\frac{c_{p}}{n_{k}}\biggr)\mathbf{1}_{\gamma_{k}} - \frac{n_{k}}{d_{k}}Q_{t_{k,-d_{k}},p},
\end{align*}
and
\begin{align*}
	\widetilde{Q}_{t_{i,l},p} &= 0, \quad l\ne -d_{i}; \\[5pt]
	\widetilde{Q}_{h_{0,j},p} &= Q_{h_{0,j},p}; \\[5pt]
	\widetilde{Q}_{h_{k,r},p} &= 0, \quad r\ne n_{k}; \\[5pt]
	\widetilde{Q}_{t_{i,-d_{i}},p} &= -d_{i}\frac{a(z)^{p}}{p!}\biggl(\log a(z)^{1/n_{0}}-\frac{c_{p}}{n_{0}}\biggr); \\[5pt]
	\widetilde{Q}_{h_{k,n_{k}},p} &= n_{k}\frac{a(z)^{p}}{p!}\biggl(\log a(z)^{1/n_{0}}-\frac{c_{p}}{n_{0}}\biggr),
\end{align*}
with
\begin{equation*}
	Q_{u,-1} := \frac{\partial Q_{u,0}}{\partial a} + \frac{\partial Q_{u,0}}{\partial \hat{a}}, \quad \widetilde{Q}_{u,-1} := \frac{d \widetilde{Q}_{u,0}}{d a},
\end{equation*}
one can then verify that this choice ensures that the resulting vector fields $\Theta_{u,p}$ satisfy the initial condition \eqref{openre2}.
 	
 For the case of $f=\hat{a}(s)$ and $F^{o}=\hat{\Omega}$, the ansatz for the vector fields $\Theta_{u,p}$ is given by:
 \begin{align}
 	\sum_{\alpha\in\mathbf{u}}\Theta_{u,p;\alpha}\partial_{\alpha}\vec{a}(z) &= \eta\biggl( \frac{\partial Q_{u,p}(a(z),\hat{a}(z))}{\partial a}, \frac{\partial Q_{u,p}(a(z),\hat{a}(z))}{\partial \hat{a}} \biggr), \\
 	\Theta_{u,p;s} &= (Q_{u,p-1}(a(s),\hat{a}(s)))_{+} + \widetilde{Q}_{u,p-1}(\hat{a}(s)) \label{privec3}
 \end{align}
 for $u\in\mathbf{u}$, and
 \begin{equation}
 	\Theta_{s,p;\alpha}=0, \quad \alpha\in\mathbf{u}; \qquad \Theta_{s,p;s}=\frac{\hat{a}(s)^{p}}{p!},
 \end{equation}
 where $Q_{u,p}(a,\hat{a})$ and $\widetilde{Q}_{u,p}(\hat{a})$ are required to satisfy
 \begin{equation*}
 	\frac{\partial Q_{u,p}}{\partial a} + \frac{\partial Q_{u,p}}{\partial \hat{a}} = Q_{u,p-1}, \quad \frac{d \widetilde{Q}_{u,p}}{d \hat{a}} = \widetilde{Q}_{u,p-1}, \quad p\ge 0.
 \end{equation*}
 A similar calculation shows that the recursive relation \eqref{openre1} holds. In this case, the functions $Q_{u,p}$ are taken to be the same as in the previous case, while $\widetilde{Q}_{u,p}(\hat{a})$ are given by:
 \begin{align*}
 	\widetilde{Q}_{t_{i,l},p} &= 0, \quad l\ne -d_{i}; \\[5pt]
 	\widetilde{Q}_{h_{0,j},p} &= 0; \\[5pt]
 	\widetilde{Q}_{h_{k,r},p} &= -Q_{h_{k,r},p}, \quad r\ne n_{k}; \\[5pt]
 	\widetilde{Q}_{t_{i,-d_{i}},p} &= 0; \\[5pt]
 	\widetilde{Q}_{h_{k,n_{k}},p} &= -n_{k}\frac{\hat{a}(z)^{p}}{p!}\biggl(\log \bigl(\hat{a}(z)^{1/n_{k}}\bigr) - \frac{c_{p}}{n_{k}}\biggr)\mathbf{1}_{\gamma_{k}},
 \end{align*}
 with
 \begin{equation*}
 	\widetilde{Q}_{u,-1} := \frac{d \widetilde{Q}_{u,0}}{d \hat{a}}.
 \end{equation*}
 It can then be verified that this construction satisfies the initial condition \eqref{openre2}.

We now use the calibration $\Theta_{u,p}$ defined above to construct the corresponding principal hierarchy. The flows of this hierarchy, denoted by $\frac{\partial}{\partial\widetilde{T}^{u,p}}$, are given by the definition \eqref{fpridef}:
\begin{align*}
	\frac{\partial}{\partial\widetilde{T}^{u,p}} &= \Theta_{u,p} \star \sum_{v\in\mathbf{u}\cup\{s\}}v_{x}\partial_{v} \\
	&= \sum_{\alpha\in \mathbf{u}}\Theta_{u,p;\alpha}\partial_{\alpha} \circ \sum_{v\in\mathbf{u}\cup\{s\}}v_{x}\partial_{v} + \sum_{v,\beta\in \mathbf{u}}v_{x}\Theta_{u,p;\beta}(\partial_{v}\partial_{\beta}F^{o})\partial_{s} \\
	&\quad + \sum_{v\in\mathbf{u}}v_{x}\Theta_{u,p;s}(\partial_{v}\partial_{s}F^{o})\partial_{s} + \sum_{\alpha\in\mathbf{u}}s_{x}\Theta_{u,p;\alpha}(\partial_{\alpha}\partial_{s}F^{o})\partial_{s} + s_{x}\Theta_{u,p;s}(\partial_{s}^{2}F^{o})\partial_{s}.
\end{align*}
We denote the principal hierarchy for the underlying Frobenius manifold $\mathcal{M}$ by
\begin{equation*}
	\frac{\partial}{\partial T^{u,p}} = \sum_{\alpha\in \mathbf{u}}\Theta_{u,p;\alpha}\partial_{\alpha} \circ \sum_{v\in\mathbf{u}\cup\{s\}}v_{x}\partial_{v}, \quad u\in \mathbf{u}.
\end{equation*}
For the case of $(F^o, f) = (\Omega, a(s))$, an application of the identities \eqref{mulright1} and \eqref{mulright2} yields the flows for $u\in\mathbf{u}$ as:
\begin{align*}
	\frac{\partial}{\partial\widetilde{T}^{u,p}} &= \frac{\partial}{\partial T^{u,p}} - \biggl( \sum_{v\in\mathbf{u}}v_{x}\partial_{v}Q_{u,p}(\vec{a}(s)) + s_{x}\partial_{s}Q_{u,p}(\vec{a}(s)) \biggr)_{\!-} \partial_{s} \\
	&\qquad + \biggl( \sum_{v\in\mathbf{u}}v_{x}\partial_{v}\widetilde{Q}_{u,p}(a(s)) + s_{x}\partial_{s}\widetilde{Q}_{u,p}(a(s)) \biggr) \partial_{s} \\
	&= \frac{\partial}{\partial T^{u,p}} - \partial_{x}(Q_{u,p}(\vec{a}(s)))_{-}\partial_{s} + \partial_{x}\widetilde{Q}_{u,p}(a(s))\partial_{s},
\end{align*}
and
\begin{equation*}
	\frac{\partial}{\partial\widetilde{T}^{s,p-1}} = \partial_{x}\biggl(\frac{a(s)^{p}}{p!}\biggr)\partial_{s}.
\end{equation*}
For the case of $(F^o, f) = (\hat{\Omega}, \hat{a}(s))$, a similar calculation shows that
\begin{align*}
	\frac{\partial}{\partial\widetilde{T}^{u,p}} &= \frac{\partial}{\partial T^{u,p}} + \biggl( \sum_{v\in\mathbf{u}}v_{x}\partial_{v}Q_{u,p}(\vec{a}(s)) + s_{x}\partial_{s}Q_{u,p}(\vec{a}(s)) \biggr)_{\!+} \partial_{s} \\
	&\qquad + \biggl( \sum_{v\in\mathbf{u}}v_{x}\partial_{v}\widetilde{Q}_{u,p}(\hat{a}(s)) + s_{x}\partial_{s}\widetilde{Q}_{u,p}(\hat{a}(s)) \biggr) \partial_{s} \\
	&= \frac{\partial}{\partial T^{u,p}} + \partial_{x}(Q_{u,p}(\vec{a}(s)))_{+}\partial_{s} + \partial_{x}\widetilde{Q}_{u,p}(\hat{a}(s))\partial_{s}, \quad u\in\mathbf{u},
\end{align*}
and
\begin{equation*}
	\frac{\partial}{\partial\widetilde{T}^{s,p-1}} = \partial_{x}\biggl(\frac{\hat{a}(s)^{p}}{p!}\biggr)\partial_{s}.
\end{equation*}
Substituting the explicit forms of $Q_{u,p}$ and $\widetilde{Q}_{u,p}$ into these expressions yields the explicit form of the principal hierarchy, thus completing the proof of Theorem \ref{mainthm2}. \hfill \qed

 \section{Reductions}
 In this section, we prove Corollaries \ref{maincor4}--\ref{maincor6} by investigating the reductions of the construction presented in the previous section to specific submanifolds of $\mathcal{M}$.
 
 \subsection{Proof of Corollary \ref{maincor4}}
 Recall that the Frobenius manifold $M$ consists of rational functions of the form
 \begin{equation}\label{ell}
 	\ell(z) = z^{n_0} + a_{n_0-2}z^{n_0-2} + \cdots + a_1 z + a_{0} + \sum_{i=1}^{m}\sum_{j=1}^{n_i}b_{i,j}(z-\varphi_{i})^{-j},
 \end{equation}
 equipped with the flat metric
 \[
 \langle \partial', \partial'' \rangle_{\eta} = \sum_{|\ell|<\infty} \operatorname*{Res}_{d\ell=0} \frac{\partial'(\ell(z)dz) \, \partial''(\ell(z)dz)}{d\ell(z)},
 \]
 and a multiplication $\circ$ defined by the symmetric 3-tensor
 \[
 c(\partial', \partial'', \partial''') := \sum_{|\ell|<\infty} \operatorname*{Res}_{d\ell=0} \frac{\partial'(\ell(z)dz) \, \partial''(\ell(z)dz) \, \partial'''(\ell(z)dz)}{d\ell(z)dz}
 \]
 via the relation
 \[
 \langle \partial' \circ \partial'', \partial''' \rangle_{\eta} = c(\partial', \partial'', \partial'''),
 \]
 where \( \partial', \partial'', \partial''' \in \operatorname{Vect}(M) \).

The proof of Corollary \ref{maincor4} follows from a direct comparison of the explicit formulas for the Frobenius structures of $M$ and $\mathcal{M}$. As shown in \cite{ma2023principal}, the Levi-Civita connection on $M$ is given by:
\begin{equation}\label{cKPcon}
	(\nabla_{\partial_1} \partial_2)(\ell(z)) = \partial_1 \partial_2 \ell(z) - \biggl(\frac{\partial_1 \ell(z) \partial_2 \ell(z)}{\ell'(z)}\biggr)_{\!\infty,\,\ge 0}'
	- \sum_{j=1}^{m}\biggl(\frac{\partial_1 \ell(z) \partial_2 \ell(z)}{\ell'(z)}\biggr)_{\!\varphi_{j},\,\le -1}'.
\end{equation}
A direct verification shows that this expression is precisely the limit of the connection on $\mathcal{M}$, given by equation \eqref{conn}, as $\zeta(z)\to 0$. 

Furthermore, the dual description of the multiplication for $M$ is described as follows. We represent a covector in $T^{\ast}_{\ell(z)}M$ by an element $\omega(z)\in\mathcal{H}$ via the pairing:
\begin{equation}
	\langle \omega(z), X \rangle := \frac{1}{2\pi \mathrm{i}} \sum_{j=1}^{m} \oint_{\gamma_j} \omega(z) \partial_X \ell(z) \, dz.
\end{equation}
The linear map \( \eta: \mathcal{H} \to T_{\ell(z)}M \) is then given by
\begin{equation}\label{kdvdulmet}
	\eta(\omega(z)) = -(\omega(z))_{+}\ell'(z) + (\omega(z)\ell'(z))_{+}
\end{equation}
and it relates to the metric via the identity
\begin{equation}
	\langle\omega(z), X\rangle = \langle\eta(\omega(z)), X\rangle_{\eta}.
\end{equation}
The multiplication operator $C_{X}: T^{\ast}_{\ell(z)}M \to T_{\ell(z)}M$, indexed by a tangent vector $X \in T_{\ell(z)}M$, is defined by the formula
\begin{equation}\label{kdvdulpro}
	C_{X}(\omega(z)) = -(\omega(z)\partial_{X}\ell(z))_{+}\ell'(z) + (\omega(z)\ell'(z))_{+}\partial_{X}\ell(z),
\end{equation}
and is related to the tangent space multiplication via
\begin{equation}
	C_{X}(\omega(z)) = X \circ \eta(\omega(z)).
\end{equation}
	
A direct comparison shows that in the limit as $\zeta(z)\to 0$, the Levi-Civita connection $\nabla$ and the multiplication operator $C_{X}$ for $\mathcal{M}$ reduce precisely to their counterparts on $M$. Since the proofs of Theorem \ref{mainthm1} and Theorem \ref{mainthm2} rely solely on these structures, their arguments remain valid in this limit. Corollary \ref{maincor4} is thus established. \hfill \qed

\subsection{Proof of Corollaries \ref{maincor5} and \ref{maincor6}}
The proofs of these two corollaries are analogous to that of Corollary \ref{maincor4}. Recall that $\hat{\mathcal{M}}$ is defined as a submanifold of $\mathcal{M}$ by imposing the following restrictions on the flat coordinates:
\[
t_{1,2l}=0, \quad t_{2i-2,l}=-t_{2i-1,l}, \quad 2\le i\le m'+1, \quad l\in\mathbb{Z},
\]
and
\begin{align*}
	h_{0,2j} &= 0, \quad j=1, \dots, n_{0}'; \\[5pt]
	h_{2k-2,r} &= -h_{2k-1,r}, \quad k=2, \dots, m'+1, \quad r=0, \dots, n_{k}',
\end{align*}
which is equivalent to
\[
a(z)=a(-z), \quad \hat{a}(z)=\hat{a}(-z)
\]
for $\vec{a}=(a(z),\hat{a}(z))\in \hat{\mathcal{M}}$. One can verify that the Levi-Civita connection $\nabla$ associated with the restricted metric on $\hat{\mathcal{M}}$ retains the same explicit form as that on $\mathcal{M}$ (see \eqref{conn}).

To provide a dual description of this metric, let $\mathcal{H}^{\mathrm{odd}} \subset \mathcal{H}$ be the subspace of functions satisfying $\omega(-z) = -\omega(z)$. Since the projection operators $(\cdot)_{\pm}$ commute with the involution $\iota \colon z \to -z$, we have
\[
-\omega_{\pm}(z)=(\omega(-z))_{\pm}=\omega_{\pm}(-z)
\]
for any $\omega(z) \in \mathcal{H}^{\mathrm{odd}}$. Consequently, the linear map $\eta \colon \mathcal{H} \times \mathcal{H} \to T_{\vec{a}}\mathcal{M}$ defined in \eqref{etaom} restricts to a map from $\mathcal{H}^{\mathrm{odd}}\times \mathcal{H}^{\mathrm{odd}}$ to $T_{\vec{a}}\hat{\mathcal{M}}$.

Using the explicit formula \eqref{mulope} for the multiplication operator $C_{\vec{\xi}}$, it follows that for any $\vec{\xi}\in T_{\vec{a}}\hat{\mathcal{M}}$, the operator $C_{\vec{\xi}}$ restricts to a map from $\mathcal{H}^{\mathrm{odd}} \times \mathcal{H}^{\mathrm{odd}}$ to $T_{\vec{a}}\hat{\mathcal{M}}$.

Thus, following the construction in \cite{ma2024infinite}, we can analogously establish a Frobenius manifold structure on $\hat{\mathcal{M}}$, which can be viewed as the restriction of the structure on $\mathcal{M}$. Furthermore, adopting the approach in \cite{ma2024infinite}, the principal hierarchy for $\hat{\mathcal{M}}$ is given by the following subhierarchy of that for $\mathcal{M}$:
\begin{gather*}
	\left\{\frac{\partial}{\partial T^{t_{1,2l-1}}}\right\}_{l\in\mathbb{Z}} \cup \left\{\frac{\partial}{\partial T^{t_{2i-2,l}}}-\frac{\partial}{\partial T^{t_{2i-1,l}}}\right\}_{2\le i\le m'+1;\, l\in\mathbb{Z}} \\
	\text{and} \\
	\left\{\frac{\partial}{\partial T^{h_{0,2j-1}}}\right\}_{1\le j\le n_{0}'} \cup \left\{\frac{\partial}{\partial T^{h_{1,2j-1}}}\right\}_{1\le j\le n_{1}'} \cup \left\{\frac{\partial}{\partial T^{h_{2k-2,r}}}-\frac{\partial}{\partial T^{h_{2k-1,r}}}\right\}_{2\le k\le m'+1;\, 0\le r\le n_{k}'}.
\end{gather*}
Finally, the solutions to the open WDVV equations and the associated principal hierarchy for $\hat{\mathcal{M}}$ are obtained directly by restricting those associated with $\mathcal{M}$. This completes the proof of Corollary \ref{maincor5}.

The proof of Corollary \ref{maincor6} proceeds analogously to that of Corollary \ref{maincor4}. It relies on a direct comparison of the explicit forms of the connection and multiplication operator on $\hat{\mathcal{M}}$ with those on $\hat{M}$ (as detailed in \cite{ma2023principal}). Therefore, we omit the details for brevity. \hfill \qed

\medskip
{\small
	\noindent{\bf Acknowledgements}.
The author thanks Professors Chao-Zhong Wu and Dafeng Zuo for their stimulating discussions and valuable suggestions.

\bibliographystyle{unsrt}

\end{document}